\titleformat{\section}[block]{\bfseries\filcenter}{\thesection.}{1em}{}
\titleformat{\subsection}[hang]{\bfseries}{\thesubsection.}{1em}{}
\providecommand{\tabularnewline}{\\}
\providecommand{\algorithmname}{{\footnotesize Algorithm}}
  \theoremstyle{definition}
  \newtheorem{defn}{\protect\definitionname}
\theoremstyle{plain}
\newtheorem{thm}{\protect\theoremname}
  \theoremstyle{plain}
  \newtheorem{cor}{\protect\corollaryname}
\newcommand{\ra}[1]{\renewcommand{\arraystretch}{#1}}
\providecommand{\definitionname}{Definition}
\providecommand{\corollaryname}{Corollary}
\providecommand{\theoremname}{Theorem}
\begin{document}
\global\long\def\az{\mathcal{A}^{0}}
\global\long\def\as{\mathcal{A}^{*}}
\global\long\def\ak{\mathcal{A}^{k}}
\global\long\def\ddd{,\ldots,}
\global\long\def\ba{\mathcal{\boldsymbol{A}}}
\global\long\def\bo{\boldsymbol{\omega}}
\global\long\def\aa{\mathcal{A}}

\title{\textbf{Sparsity Oriented Importance Learning for High-dimensional Linear Regression}}
\author{Chenglong Ye \thanks{School of Statistics, University of Minnesota (yexxx323@umn.edu)},
Yi Yang \thanks{Department of Mathematics and Statistics, McGill University (yi.yang6@mcgill.ca)},
Yuhong Yang \thanks{School of Statistics, University of Minnesota (yangx374@umn.edu)}}
\date{\today}
\maketitle
\begin{abstract}

With now well-recognized non-negligible model selection uncertainty, data analysts should no longer be satisfied with the output of a single final model from a model selection process, regardless of its sophistication. To improve reliability and reproducibility in model choice, one constructive approach is to make good use of a sound variable importance measure. Although interesting importance measures are available and increasingly used in data analysis, little theoretical justification has been done. In this paper, we propose a new variable importance measure, {\bf s}parsity {\bf o}riented {\bf i}mportance {\bf l}earning (SOIL), for high-dimensional regression from a sparse linear modeling perspective by taking into account the variable selection uncertainty via the use of a sensible model weighting. The SOIL method is theoretically shown to have the inclusion/exclusion property: When the model weights are properly around the true model, the SOIL importance can well separate the variables in the true model from the rest.  In particular, even if the signal is weak, SOIL rarely gives variables not in the true model significantly higher important values than those in the true model. Extensive simulations in several illustrative settings and real data examples with guided simulations show desirable properties of the SOIL importance in contrast to other importance measures.
\end{abstract}

\noindent {\bf Keywords:} Variable importance; Model averaging; Adaptive regression
by mixing; Reliability and reproducibility.

\section{INTRODUCTION}

Variable importance has been an interesting research topic that helps
to identify which variables are most important for understanding, interpretation, estimation
or prediction purposes. The potential usages of variable importance measures include: 1. They help reduce the list of variables to be considered by screening out those with importance values below a threshold. This leads to cost and time saving in data analysis; 2. They also
help decision makers to obtain a more comprehensive understanding of
the underlying data generation process than trusting any single model by a variable selection procedure; 3. They offer a ranking of variables that can be used to consider model selection or model averaging in a nested fashion, which simplifies the consideration of all subset models;  4. They can help decision makers to change or replace variables based on practical considerations. See \citealp{feldman2005relative,louppe2013understanding,braun2011exploratory,gromping2015variable,hapfelmeier2014new,archer2008empirical,strobl2007bias}
for reference.

Under the linear regression setting, various
methods have been proposed for evaluating variable importance. The first type includes simple measures based on a final selected model, e.g., $t$-test values, (standardized)
regression coefficients, and $p$-values of the variables. This approach has the severe drawback associated with any ``winner takes all" variable selection method. The variable selection uncertainty is totally ignored
and all the non-selected variables have zero importance. 

Another approach is based on
the $R^{2}$ decomposition. \citet{lindeman1980introduction} used
the improved explained variance averaged over all possible orderings
of predictors to provide a ranking of the predictors. \citet{feldman1999proportional}
extended it to the weighted version (PMVD). Several encouraging methods,
such as dominance analysis \citep{budescu1993dominance}, hierarchical
partitioning \citep{chevan1991hierarchical}, information criterion
based method \citep{theil1988information} and the product of standardized
true coefficients and partial correlation \citep{hoffman1960paramorphic},
have also been proposed. 

Besides importance measuring with parametric models, nonparametric approaches are also available. For regression and classification, random forest \citep{breiman2001random} and
its variants have attracted a lot of attention in many fields. \citet{breiman2001random}
proposed two versions of variable importances for random forest. \citet{ishwaran2007variable}
studied the theoretical properties of variable importance for binary
regression with random forest. There, the variable importance is defined as the difference between the prediction error before and after the variable is noised up. Under proper assumptions, the variable importance is shown to converge and suitably upper-bounded. \citet{strobl2008conditional} proposed
conditional variable importance for random forest to correct the bias
of variable importance when there exist correlated
variables. \cite{ferrari2015confidence} assess variable importance from a variable selection confidence set (VSCS) perspective.

In this paper, we propose a sparsity oriented importance
learning (SOIL) for high-dimensional regression data. For our
approach, by assigning weights to the candidate linear models (or generalized linear models for classification), we come up with measures of importance of the predictors in an absolute scale in $[0,1]$. 

Several features/advantages of our method can be concluded as follows. First,
it involves multiple high-dimensional variable selection methods and combines all
their solution path models, which produces many candidate models rather than being based on only one model selection method. The resulting importance values are thus more reliable than trusting one method alone. Second, SOIL uses external weighting, which
is independent of the model selection methods. This can avoid possible bias brought up by using a method both for coming up with candidate models and for assessing the models for weighting. Third, from the main
theorem in the paper, we gain a theoretical understanding of our method.
We prove that the importances of the variables will tend to either
0 or 1 as the sample size increases, as long as the weighting is sensible. Last but not least, compared with
other importance measures, our method also shows excellent performances in the numerical study, with desirable behaviors such as {\it exclusion, inclusion,
order preserving, robustness,} etc. 

In the current era of rich high-dimensional data, with the well-recognized severe problem of irreproducibility of scientific findings (see, \citealp[e.g.][]{ioannidis2011improving, mcnutt2014raising, stodden2015reproducing}), we believe the use of informative importance measures can much improve the reliability of data analysis in multiple ways: 
\begin{enumerate}
\item First, if the data analyst has already chosen a set of covariates for finalizing a model to be recommended, the SOIL importance measure is helpful to put the model under a more objective light. He/she can immediately inspect if some variables deemed important by SOIL are missing in the set or the other way around. If so, the analyst may want to investigate on the matter. For instance, residuals from the model based on the current set of covariates, when plotted against the missing variables, may reveal their relevance. Models with/without the variables in questions can be fit and compared for a better understanding on their usefulness. 

\item Based on the theoretical properties of the SOIL, variables most suitable for sparse modeling receive higher importance values. Thus the SOIL can be naturally used to find the best model for the data. In theory, any fixed cutoff in $(0,1)$ leads to a good performance (see Theorem 2). But the best cutoff depends on the purpose of the final model: for prediction accuracy, the cutoff should be lower and for identifying variables than can be validated at similar sample sizes in future studies, the cutoff should be higher. See e.g., \citet{Yang2005bm} to understand the subtle matter of the conflict between model identification and estimation/prediction.    

\item Whether one comes up with a set of covariates based on SOIL importance (as described above) or not (e.g., using a penalized likelihood based model selection method), the SOIL importance values of the variables help the data analyst get a sense on model selection uncertainty. More specifically, if there are quite a few variables having importance values similar to some in a final model (obtained from a trustworthy process that has, at least reasonably, justified the usefulness of the selected covariates, e.g., based on cross validation), it may indicate that the model selection uncertainty is perhaps high for the data and there are alternative choices of variables that can give similar predictive performances. In such a case, it is advantageous for the data analyst and the decision maker to be well-informed on possible alternative models/covariates to be used. For instance, if some covariates are much less costly for future experiments or operations, they may be preferred to be included in the final model even if their importance values are slightly lower than some other ones in a good model.

\item When estimating the regression function or prediction is the main goal, the understanding on degree of model selection uncertainty, together with other model selection diagnostic tools (see, e.g., \citealp{nan2014variable} for references), can help the data analyst decide on the choice between model selection and model averaging (see, \citealp{yang2003regression,chen2007model} for results on comparison between model selection and model averaging).          

\end{enumerate}  

In summary, the SOIL method is helpful in different stages of model building. It can be used to narrow down the set of covariates for further consideration and for reaching a final model with sound considerations. Equally or even more importantly, it provides an objective view on reliability of the model and the model selection uncertainty. This gives information unavailable in the traditional practice of glorifying the final model and thus can help much improve reproducibility of data analysis that involves variable selection.   

The remainder of the paper is organized as follows. In Section 2,
we introduce the proposed SOIL methodology and
provide a theoretical understanding on some key aspects. Section 3 presents
the details of choosing the candidate models and the weighting for
SOIL in practice. In Section 4, we conduct several simulations that fairly and informatively 
compare the performance of SOIL and three existing and commonly used variable
importance measures (LMG and two versions of random forest importances).
Furthermore, we apply these methods to two real datasets. A discussion
about variable importance is then presented in Section 5, followed
by the proofs of the results in Appendix.

\section{GENERAL METHODOLOGY}

In this section, we introduce the \emph{Sparsity Oriented Importance
Learning} (SOIL) procedure, which provides an objective and informative
profile of variable importances for high dimensional regression and
classification models. We consider the regression setting first, and
the generalization to the classification model will be discussed later
in Section \ref{sub:Estimating-the-weights}.

Let $\mathbf{X}=(X_{1}\ddd X_{p})$ be the $n\times p$ design
matrix with $X_{j}=(x_{1j}\ddd x_{nj})^{\intercal}$, $j=1\ddd p$,
and $\mathbf{y}=(y_{1}\ddd y_{n})^{\intercal}$ be the $n$-dimensional
response vector. The design matrix can also be written as $\mathbf{X}=(\mathbf{x}_{1}\ddd\mathbf{x}_{n})^{\intercal}$,
where $\mathbf{x}_{i}=(x_{i1}\ddd x_{ip})^{\intercal}$, $i=1\ddd n$.
We consider the following underlying linear regression model
\[
\mathbf{y}=\mathbf{X}\boldsymbol{\beta}^{*}+\boldsymbol{\varepsilon},
\]
where $\boldsymbol{\varepsilon}$ is the vector of $n$ independent
errors and $\boldsymbol{\beta}^{*}=(\beta_{1}^{*}\ddd\beta_{p}^{*})^{\intercal}$
is a $p$-dimensional vector of the true underlying model that generates
the data. In general, predictors may include those created by the
original predictors observed, such as $\sqrt{X_{1}}$, $X_{1}^{2}$
and $X_{1}X_{3}$. We adopt the sparsity assumption that most regression
coefficients $\beta_{j}^{*}$ are zero. Denote by $|\cdot|$ the cardinality
of a set. We assume $\boldsymbol{\beta}^{*}$ is $r^{*}$-sparse,
where $r^{*}=|\mathcal{A}^{*}|$ with $\mathcal{A}^{*}\equiv\mathrm{supp(\boldsymbol{\beta}^{*})}=\{j:\beta_{j}^{*}\neq0\}$.

SOIL importance depends on two ingredients: a manageable set of models
(often based on a preliminary analysis) and a reliable external weighting
method on the models. Together they can provide valuable information
on importance of the predictors.
 
Suppose that one can obtain a collection of models $\ba=\{\mathcal{A}_{k}\}_{k=1}^{K}$,
which can be either a full list of all-subset models when $p$ is
small, or a group of models obtained from high-dimensional variable selection procedures
such as Lasso \citep{tibshirani1996regression}, Adaptive Lasso \citep{zou2006adaptive}, SCAD \citep{fan2001variable} and MCP \citep{zhang2010nearly} etc., when $p$ is large. We refer to
$\mathcal{A}_{k}$, $k=1\ddd K$ as \emph{candidate models}, and $\mathbf{w}=(w_{1}\ddd w_{K})^{\intercal}$
as the corresponding weighting vector, which is estimated from the
data.

Given the set $\ba$ and the weighting $\mathbf{w}$, we define the
SOIL importance measure for the $j$-th variable, $j\in\{1\ddd p\}$,
as the accumulated sum of weights of the candidate models $\ak$ that
contains the $j$-th variable. That is
\[
\mbox{SOIL Importance}:\quad S_{j}\equiv S(j;\mathbf{w},\ba){\displaystyle ={\textstyle \sum_{k=1}^{K}}w_{k}I(j\in\ak)}.
\]

\subsection{Theoretical properties}

We will show consistency of the SOIL importance measure, under the condition
that the weighting vector $\mathbf{w}=(w_{1}\ddd w_{K})^{\intercal}$
satisfies the following properties referred to as \emph{weak consistency}
and \emph{consistency:}
\begin{defn}
[Weak Consistency and Consistency] The weighting vector $\mathbf{w}$
is \emph{weakly} \emph{consistent} if 
\begin{equation}
\dfrac{\sum_{k=1}^{K}w_{k}|\ak\nabla\as|}{r^{*}}\ \overset{p}{\to}\ 0,\qquad\mathrm{as}\ n\rightarrow\infty,\label{eq:weakly_consistent_def}
\end{equation}
and $\mathbf{w}$ is \emph{consistent} if
\[
\sum_{k=1}^{K}w_{k}|\ak\nabla\as|\ \overset{p}{\to}\ 0,\qquad\mathrm{as}\ n\rightarrow\infty,
\]
where $\nabla$ denotes the symmetric difference of two sets and $|\cdot|$ denotes number counting. 
\end{defn}
Intuitively, both weak consistency and consistency of weighting ensure that
the weighting of the candidate models is concentrated enough around
the true model to different degrees. Including the denominator $r^{*}$
in \eqref{eq:weakly_consistent_def} makes the weak consistency condition
more likely to be satisfied than consistency, when the true model
is allowed to increase in dimension as $n$ increases. There are several
different methods in the literature for providing the weight vector
$\mathbf{w}=(w_{1}\ddd w_{K})^{\intercal}$ for the candidate models
$\ba$. For example, \citet{buckland1997model} and \citet{leung2006information} studied a weighting
method based on information criterion, such as AIC \citep{AIC73}
and BIC \citep{schwarz1978estimating}; \citet{HMR1999} proposed the weighting by
Bayesian model averaging (BMA) from a Bayesian perspective; Several
attractive frequentist model averaging approaches are also developed (\citealp[e.g.][]{YY01,hjort2003frequentist,buckland1997model,hansen2007least,liang2012optimal,cheng2015toward,cheng2015forecasting}). In particular,
\citet{YY01} proposed a weighting strategy by data splitting and
cross-assessment, which is referred to as the adaptive regression
by mixing (ARM). He proved that the weighting by ARM delivers the
best rate of convergence for regression estimation. One advantage of ARM is that it can be applied to combine general regression procedures (not limited to parametric models). The ARM weighting
was extended to the classification problems in \citet{yang2000adaptive,yuan2008combining,zhang2013adaptively}.

Among the aforementioned weighting methods, there are several that give the consistent weights $\mathbf{w}$. For example, when there are a fixed number of models in the candidate model set,
BMA typically gives a consistent weighting. ARM also gives consistent
weighting when the data splitting ratio is properly chosen \citep{YY2007}. Now we prove that (a) under the assumption of weakly consistent weighting,
the sum of the SOIL importance of the true variables will tend to
the size of the true model $r^{*}$, while the sum of the SOIL importance
of the variables excluded by the true model converges to 0; (b)
a consistent weighting ensures that the SOIL importance of any true
variable tends to one as the sample size $n$ goes to infinity; while
each variable outside the true model will have the SOIL importance
tend to 0.
\begin{thm}
\label{thm:Under-the-assumption}(a) Under the assumption that the
weighting $\mathbf{w}$ is weakly consistent, we have:
\[
\dfrac{\sum_{j\in\mathcal{A}^{*}}S_{j}}{r^{*}}{\displaystyle \ \overset{p}{\to}\ }1,\qquad\dfrac{\sum_{j\notin\mathcal{A}^{*}}S_{j}}{r^{*}}{\displaystyle \ \overset{p}{\to}\ }0,\qquad\mathrm{as}\ n\rightarrow\infty;
\]
(b) When the weighting $\mathbf{w}$ is consistent, we have:
\[
\underset{j\in\as}{\min}\ S_{j}\overset{p}{\to}1,\qquad\underset{j\notin\as}{\max}\ S_{j}\overset{p}{\to}0,\qquad\mathrm{as}\ n\rightarrow\infty.
\]

\end{thm}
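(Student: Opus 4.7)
The plan is to reduce both parts of the theorem to a single swap-of-summation identity and then bound each one-sided set difference by the symmetric difference $|\ak \nabla \as|$, so that the (weak) consistency hypothesis can be applied directly. The key observation is that for any $B \subseteq \{1,\ldots,p\}$,
\[\sum_{j \in B} S_j \;=\; \sum_{j \in B} \sum_{k=1}^K w_k\, I(j \in \ak) \;=\; \sum_{k=1}^K w_k\, |\ak \cap B|,\]
which specializes to $\sum_{j \in \as} S_j = \sum_k w_k |\ak \cap \as|$ and $\sum_{j \notin \as} S_j = \sum_k w_k |\ak \setminus \as|$. Throughout I take the normalization $\sum_k w_k = 1$ as implicit in calling $\mathbf{w}$ a weighting vector; it is satisfied by BMA, ARM, and the information-criterion-based weights cited earlier in the paper.

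For part (a), I would combine the first identity with $|\ak \cap \as| = r^* - |\as \setminus \ak|$ to obtain
\[\frac{\sum_{j \in \as} S_j}{r^*} \;=\; 1 \;-\; \frac{\sum_k w_k\, |\as \setminus \ak|}{r^*}.\]
Since $|\as \setminus \ak| \leq |\ak \nabla \as|$, the subtracted term is dominated by $\sum_k w_k |\ak \nabla \as| / r^*$, which is $o_p(1)$ by weak consistency. The second claim in (a) follows symmetrically from $|\ak \setminus \as| \leq |\ak \nabla \as|$, giving $\sum_{j \notin \as} S_j / r^* \leq \sum_k w_k |\ak \nabla \as| / r^* = o_p(1)$.

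For part (b), for every $j \in \as$ the identity $I(j \notin \ak) = I(j \in \as \setminus \ak)$ yields
\[1 - S_j \;=\; \sum_k w_k\, I(j \in \as \setminus \ak) \;\leq\; \sum_k w_k\, |\as \setminus \ak| \;\leq\; \sum_k w_k\, |\ak \nabla \as|.\]
The final bound is uniform in $j$, so $\max_{j \in \as}(1-S_j) \overset{p}{\to} 0$ by the stronger consistency hypothesis, which is precisely $\min_{j \in \as} S_j \overset{p}{\to} 1$. Symmetrically, for $j \notin \as$ one has $I(j \in \ak) = I(j \in \ak \setminus \as)$, producing the uniform bound $S_j \leq \sum_k w_k |\ak \nabla \as|$ and hence $\max_{j \notin \as} S_j \overset{p}{\to} 0$.

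There is no substantive obstacle here: once one interchanges the $j$- and $k$-sums and notices that both $|\ak \setminus \as|$ and $|\as \setminus \ak|$ are dominated by $|\ak \nabla \as|$, the conclusions reduce to the defining properties of (weak) consistency. The only point requiring any care is in part (b), where one needs a bound uniform in $j$ to pass from pointwise control of each $S_j$ to control of the $\min$ and $\max$; fortunately the symmetric-difference bound is already independent of $j$, so this step is automatic.
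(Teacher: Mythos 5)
Your proof is correct and follows essentially the same route as the paper's: swap the $j$- and $k$-sums to identify $\sum_{j\in\as}(1-S_j)$ with $\sum_k w_k|\as\setminus\ak|$ and $\sum_{j\notin\as}S_j$ with $\sum_k w_k|\ak\setminus\as|$, then bound each one-sided difference by $|\ak\nabla\as|$ and invoke (weak) consistency, with the normalization $\sum_k w_k=1$ used implicitly in both arguments. The only difference is cosmetic: you spell out part (b) via the uniform-in-$j$ bound $1-S_j\leq\sum_k w_k|\ak\nabla\as|$ (and its mirror for $j\notin\as$), a step the paper's appendix leaves implicit.
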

In some applications, one may set up a threshold value $c\in(0,1)$
for the variable importance, and only keeps all the variables whose
importances are greater than $c$. Denote by $\aa_{c}=\{j:S_{j}>c\}$
the model selected according to this criterion. The property of $\aa_{c}$ is
shown in the following theorem, which indicates that for any threshold
$c$, the number of the true variables missed by $\aa_{c}$ and the
number of the over-selected variables in $\aa_{c}$ will be relatively
small as $n$ grows large.
\begin{thm}
\label{thm:Let-,-}For any threshold $c\in(0,1)$, denote $\overline{\aa}_{c}=\{j\in\as:S_{j}\leq c,\ j=1,...,p\}$,
$\underline{\aa}_{c}=\{j\notin\aa^{*}:S_{j}>c,j=1,...,p\}$, then
if $\mathbf{w}$ is weakly consistent, we have

\[
\ \dfrac{|\overline{\aa}_{c}|}{r^{*}}{\displaystyle \ \overset{p}{\to}\ }0,\qquad\dfrac{|\underline{\aa}_{c}|}{r^{*}}{\displaystyle \ \overset{p}{\to}\ }0,\qquad\mathrm{as}\ n\rightarrow\infty.
\]

\end{thm}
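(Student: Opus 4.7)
My plan is to deduce this theorem directly from Theorem~\ref{thm:Under-the-assumption}(a) via simple Markov-type counting inequalities. The key identity is that, by interchanging the order of summation,
\[
\sum_{j\in\as}(1-S_j)=\sum_{k=1}^{K}w_k|\as\setminus\ak|,\qquad \sum_{j\notin\as}S_j=\sum_{k=1}^{K}w_k|\ak\setminus\as|,
\]
so the two quantities $\sum_{j\in\as}(1-S_j)/r^{*}$ and $\sum_{j\notin\as}S_j/r^{*}$ converge to $0$ in probability under weak consistency (the first is exactly $1-\sum_{j\in\as}S_j/r^{*}\to 0$ by Theorem~\ref{thm:Under-the-assumption}(a), and the second is the second claim of the same theorem).

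For the missed true variables, I would observe that every $j\in\overline{\aa}_c$ satisfies $S_j\le c$, hence $1-S_j\ge 1-c>0$. Summing this over $\overline{\aa}_c\subseteq\as$ gives
\[
(1-c)\,|\overline{\aa}_c|\ \le\ \sum_{j\in\overline{\aa}_c}(1-S_j)\ \le\ \sum_{j\in\as}(1-S_j),
\]
where the last inequality uses $0\le S_j\le 1$ for $j\in\as\setminus\overline{\aa}_c$. Dividing by $r^{*}$ and invoking the first conclusion of Theorem~\ref{thm:Under-the-assumption}(a) yields $|\overline{\aa}_c|/r^{*}\overset{p}{\to}0$.

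For the over-selected variables, each $j\in\underline{\aa}_c$ satisfies $S_j>c$, so
\[
c\,|\underline{\aa}_c|\ \le\ \sum_{j\in\underline{\aa}_c}S_j\ \le\ \sum_{j\notin\as}S_j,
\]
and dividing by $r^{*}$ and invoking the second conclusion of Theorem~\ref{thm:Under-the-assumption}(a) gives $|\underline{\aa}_c|/r^{*}\overset{p}{\to}0$.

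There is no real obstacle here; the proof is essentially a one-line reduction in each direction once one writes $S_j$ and $1-S_j$ as positive weighted indicators. The only thing to be careful about is that $c$ and $1-c$ appear in the denominator, which is harmless because $c\in(0,1)$ is fixed, so the multiplicative factors $1/c$ and $1/(1-c)$ do not interfere with the convergence in probability to $0$.
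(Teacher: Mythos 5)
Your proposal is correct and rests on exactly the same key inequalities as the paper's own proof, namely $(1-c)\,|\overline{\aa}_{c}|\leq\sum_{j\in\as}(1-S_{j})$ and $c\,|\underline{\aa}_{c}|\leq\sum_{j\notin\as}S_{j}$, combined with the convergences $\sum_{j\in\as}(1-S_{j})/r^{*}\overset{p}{\to}0$ and $\sum_{j\notin\as}S_{j}/r^{*}\overset{p}{\to}0$ from Theorem~\ref{thm:Under-the-assumption}(a). The only difference is stylistic: the paper packages the first bound inside a proof by contradiction with a subsequence argument, whereas you apply the bounds directly, which is essentially the same argument (and arguably cleaner).
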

As for the choice of threshold, its value depends on how one intends
to balance between the cost of overfitting and under-fitting. Actually
$|\aa_{c}\nabla\aa^{*}|=|\overline{\aa}_{c}\cup\underline{\aa}_{c}|$. We can
also get that $\dfrac{|\aa_{c}\nabla\aa^{*}|}{r^{*}}{\displaystyle \ \overset{p}{\to}\ }0$
as $n\rightarrow\infty$. The proofs of Theorem \ref{thm:Under-the-assumption} and Theorem
\ref{thm:Let-,-} are presented in the Appendix.

\section{IMPLEMENTATION}

\subsection{Candidate models}

Now we discuss how to choose candidate models for computing the SOIL
importance. One approach is to use a complete collection of all-subset
models as the candidate models, i.e. 
\[
\ba=\{\varnothing,\{j_{1}\}\ddd\{j_{p}\},\{j_{1},j_{2}\},\{j_{1,}j_{3}\}\ddd\{j_{1}\ddd j_{p}\}\},
\]
where $j_{1}\ddd j_{p}\in\{1\ddd p\}$. However, in the high-dimensional
setting when $p\gg n$, using the candidate models with all subsets
is computationally infeasible. Alternatively, we
obtain the candidate models using tools for high-dimensional penalized
regression
\begin{equation}
\min_{\boldsymbol{\beta}\in\mathbb{R}^{p}}\frac{1}{n}\sum_{i=1}^{n}(y_{i}-\mathbf{x}_{i}^{\intercal}\boldsymbol{\beta})^{2}+\sum_{j=1}^{p}p_{\lambda}(\beta_{j}),\label{eq:obj}
\end{equation}
where $p_{\lambda}(\cdot)$ is a nonnegative penalty function with
regularization parameter $\lambda\in(0,\infty)$, such as, Lasso \citep{tibshirani1996regression}
penalty $p_{\lambda}(u)=\lambda w|u|$ in \eqref{eq:obj}, and nonconvex
penalties including the smoothly clipped absolute deviation (SCAD)
penalty \citep{fan2001variable} 
\begin{eqnarray*}
p_{\lambda}(u) & = & \lambda|u|I(|u|\leq\lambda)+\left\{ \lambda|u|-\frac{(\lambda-|u|)^{2}}{2(\gamma-1)}\right\} I(\lambda<|u|\leq\gamma\lambda)\\
 &  & +\frac{(\gamma+1)\lambda^{2}}{2}I(|u|>\gamma\lambda),\qquad(\gamma>2),
\end{eqnarray*}
or the minimax concave penalty (MCP, \citealp{zhang2010nearly})
\[
p_{\lambda}(u)=\lambda\left(|u|-\frac{u^{2}}{2\gamma\lambda}\right)I(|u|\leq\gamma\lambda)+\frac{\gamma\lambda^{2}}{2}I(|u|>\gamma\lambda),\qquad(\gamma>1).
\]
We first apply a high-dimensional model selection method, e.g. SCAD,
on the data to compute solution paths for a sequence of tuning
parameter $\{\lambda_{1}\ddd\lambda_{L}\}$. Let $\{\widehat{\boldsymbol{\beta}}^{\lambda_{1}}\ddd\widehat{\boldsymbol{\beta}}^{\lambda_{L}}\}$
be the estimated coefficients of $L$ different regularization levels for the SCAD penalty and 
\[
\ba_{\mathrm{SCAD}}=\{\aa{}^{\lambda_{1}},\aa^{\lambda_{2}}\ddd\aa^{\lambda_{L}}\}
\]
be the resulting estimated models, where $\mathcal{A}^{\lambda_{l}}\equiv\mathrm{supp(\widehat{\boldsymbol{\beta}}^{\lambda_{l}})}=\{j:\widehat{\beta}_{j}^{\lambda_{l}}\neq0\}$.
We then use the set $\ba_{\mathrm{SCAD}}$ as the set of candidate
models. 

To further increase the chance of capturing the true/best model, we can
put together the resulting models from several different penalties to form
a larger set of candidate models, for example $\ba=\{\ba_{\mathrm{Lasso}},\ba_{\mathrm{AdaptiveLasso}},\ba_{\mathrm{SCAD}},\ba_{\mathrm{MCP}}\}$.
The individual penalized methods for producing $\ba$ do not have
to all contain the true model $\as$. As long as
there is at least one candidate model in the solution paths being (or very
close to) the true model, SOIL importance can still work well, provided that
the weighing is sensible.  By considering multiple model selection
methods through merging their solution paths, the chance of including
the true model in $\ba$ is enhanced.

\subsection{Weighting \label{sub:Estimating-the-weights}}

In this paper, we focus on two kinds of weighting methods: ARM weighting,
which is a weighting strategy by data splitting and cross-assessment,
and BIC weighing by BIC or a modified BIC information criterion (BIC-p) for
high dimensional data. \citet{yang1998asymptotic} also pointed out
that when we have exponentially many models, we should consider the
model complexity, which can also be interpreted as the prior probability
for the model. When the dimensionality is large, a uniform prior penalty
in ARM and BIC does not perform well. Following the same approach
in \citet{nanying}, we consider a non-uniform prior (or descriptive complexity from a coding perspective) $e^{-\psi C_{k}}$ 
when computing both then ARM weighting and the BIC weighting, where $\psi$ is a positive constant and
$C_k$ will be given in Algorithm 1.

\subsubsection*{Weighting using ARM with nonuniform priors.}

The ARM weighting method randomly splits the data $\mathbf{D}=\{(\mathbf{x}_{i},y_{i})\}_{i=1}^{n}$
into a training set $\mathbf{D}_{1}$ and a test set $\mathbf{D}_{2}$
of equal size (for simplicity, assume $n$ is an even number). Then
the regression models trained on $\mathbf{D}_{1}$ are
used for prediction on $\mathbf{D}_{2}$. Then the weights
$\mathbf{w}=(w_{1}\ddd w_{K})^{\intercal}$ can be computed based
on this prediction. Specifically, if we denote by $\boldsymbol{\beta}_{s}^{(k)}$
the nonzero-coefficient sub-vector of $\boldsymbol{\beta}^{(k)}$
specified by the model $\ak$, and let $\mathbf{x}_{s}^{(k)}\in\mathbb{R}^{|\ak|}$
be the corresponding subset of predictors, we summarize the ARM weighting
method in Algorithm \ref{alg:ARM-weighting-method}.

\begin{algorithm}[H]
\begin{itemize}[nolistsep]
\item {\footnotesize{}Randomly split $\mathbf{D}$ into a training set $\mathbf{D}_{1}$
and a test set $\mathbf{D}_{2}$ of equal size. }{\footnotesize \par}
\item {\footnotesize{}For each $\ak\in\ba$, fit a standard linear regression
of $y$ on $\mathbf{x}_{s}^{(k)}$ using the training set $\mathbf{D}_{1}$
and get the estimated $\widehat{\boldsymbol{\beta}}_{s}^{(k)}$ and
$\widehat{\boldsymbol{\sigma}}_{s}^{(k)}$. }{\footnotesize \par}
\item {\footnotesize{}For each $\ak$, compute the prediction $\mathbf{x}_{s}^{(k)\intercal}\widehat{\boldsymbol{\beta}}_{s}^{(k)}$
on the test set $\mathbf{D}_{2}$. }{\footnotesize \par}
\item {\footnotesize{}Compute the weight $w_{k}$ for each candidate model:
\[
w_{k}=\frac{e^{-\psi C_{k}}(\widehat{\boldsymbol{\sigma}}_{s}^{(k)})^{-n/2}\prod_{i\in\mathbf{D}_{2}}\exp(-(\widehat{\boldsymbol{\sigma}}_{s}^{(k)})^{-2}(y_{i}-\mathbf{x}_{s,i}^{(k)\intercal}\widehat{\boldsymbol{\beta}}_{s}^{(k)})^{2}/2)}{\sum_{l=1}^{K}e^{-\psi C_{l}}(\widehat{\boldsymbol{\sigma}}_{s}^{(l)})^{-n/2}\prod_{i\in\mathbf{D}_{2}}\exp(-(\widehat{\boldsymbol{\sigma}}_{s}^{(l)})^{-2}(y_{i}-\mathbf{x}_{s,i}^{(l)\intercal}\widehat{\boldsymbol{\beta}}_{s}^{(k)})^{2}/2)},
\]
for $k=1\ddd K$, where $C_{k}=s_{k}\log{\frac{e\cdot p}{s_{k}}}+2\log(s_{k}+2)$.}{\footnotesize \par}
\item {\footnotesize{}Repeat the steps above (with random data splitting)
$L$ times to get $w_{k}^{(l)}$ for $l=1\ddd L$, and get $w_{k}=\frac{1}{L}\sum_{l=1}^{L}w_{k}^{(l)}$. }{\footnotesize \par}
\end{itemize}
\caption{{\footnotesize The procedure of the ARM weighting for the regression case.}\label{alg:ARM-weighting-method}}
\end{algorithm}

\subsubsection*{Weighting using information criteria with nonuniform priors.}

An alternative way of weighting is using BIC information criteria. Define
$I_{k}^{\mathrm{BIC}}=-2(\log\ell_{k}+s_{k}\log n)$ as the BIC information
criterion, where $\ell_{k}$ is the maximized likelihood for model
$k$ and $s_{k}$ denotes the number of non-constant predictors. Then
weight $w_{k}$ for model $\ak\in\ba$ is computed by 
\begin{equation}
w_{k}=\exp(-\frac{I_{k}}{2}-\psi C_{k})/\sum_{l=1}^{K}\exp(-\frac{I_{l}}{2}-\psi C_{l}).\label{eq:bic}
\end{equation}
We refer to the above approach with nonuniform priors as the BIC-p weighting.

Besides the ARM and BIC-p weighting, one can also consider another alternative weighting approach by using Fisher's fiducial idea from the generalized fiducial inference \citep{lai2015generalized}. The details are included in Supplementary Materials Part A. We do not discuss this method in details since it only applies to the regression settings.

Often consistency of a weighting method is proved when all subset
models are considered (\citealp[e.g.][]{lai2015generalized}). But
when $p$ is large, it is computationally infeasible to include all the
variables, so we need some screening methods to reduce the number
of variables. Next we prove the consistency of SOIL importance: 
\begin{defn}
[Path-consistent] A method is called path-consistent if 
\[
P(\as\in\Delta)\rightarrow1,\ as\ n\rightarrow\infty,
\]
where $\Delta$ denotes the whole solution paths produced by the method.\end{defn}
\begin{cor}
\label{corollary}Under the assumption that the weighting $\mathbf{w}$ on the all-subset candidate models $\ba$ is consistent, as long
as at least one method is path-consistent, we have 
\[
\underset{j\in\as}{\min}\ S(j;\mathbf{w^{'}},\ba^{'})\overset{p}{\to}1,\qquad\underset{j\notin\as}{\max}\ S(j;\mathbf{w^{'}},\ba^{'})\overset{p}{\to}0,\qquad\mathrm{as}\ n\rightarrow\infty,
\]
where $\mathbf{w^{'}}$ is the renormalized weighting on $\ba^{'}$,
which is the collection of models using union of solution paths.
\end{cor}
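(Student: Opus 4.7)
The plan is to reduce Corollary~\ref{corollary} to Theorem~\ref{thm:Under-the-assumption}(b) applied to the pair $(\mathbf{w}',\ba')$: it suffices to show that the renormalized weighting $\mathbf{w}'$ on $\ba'$ is itself \emph{consistent} in the sense of the definition, i.e.
\[
\sum_{k:\,\mathcal{A}_{k}\in\ba'} w'_{k}\,|\mathcal{A}_{k}\nabla\as|\ \overset{p}{\to}\ 0, \qquad \text{as } n\to\infty,
\]
because once this is established, Theorem~\ref{thm:Under-the-assumption}(b) immediately yields the $\min/\max$ conclusions claimed in the corollary.

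First I would extract from the assumed consistency of $\mathbf{w}$ on the all-subset set $\ba$ the fact that $w_{\as}\overset{p}{\to}1$. This follows because $|\mathcal{A}_{k}\nabla\as|\geq 1$ whenever $\mathcal{A}_{k}\neq\as$, so that $\sum_{\mathcal{A}_{k}\neq\as} w_{k}\leq \sum_{k} w_{k}|\mathcal{A}_{k}\nabla\as|\overset{p}{\to}0$, and hence $w_{\as}=1-\sum_{\mathcal{A}_{k}\neq\as}w_{k}\overset{p}{\to}1$. Second, I would use path-consistency of at least one of the methods, together with $\ba'=\bigcup_{m}\Delta_{m}$, to conclude that $P(\as\in\ba')\to 1$. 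Put together, these two facts imply that, with probability tending to one, the renormalization denominator $\sum_{l:\,\mathcal{A}_{l}\in\ba'}w_{l}$ is at least $w_{\as}$, and hence is bounded below by a quantity converging in probability to $1$.

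For the numerator, monotonicity of the sum over the subcollection $\ba'\subseteq\ba$ gives $\sum_{k:\,\mathcal{A}_{k}\in\ba'}w_{k}|\mathcal{A}_{k}\nabla\as|\leq\sum_{k}w_{k}|\mathcal{A}_{k}\nabla\as|\overset{p}{\to}0$. Dividing the numerator by the denominator delivers the required convergence of the renormalized consistency sum, and the corollary follows.

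The only mildly delicate point, and the principal technical obstacle, is the careful bookkeeping around the random event $\{\as\in\ba'\}$: off of this event the renormalization can behave arbitrarily, and the convergences above are only in probability rather than pathwise. The cleanest implementation is to fix $\varepsilon>0$ and intersect three high-probability events, namely $\{w_{\as}>1-\varepsilon\}$, $\{\sum_{k}w_{k}|\mathcal{A}_{k}\nabla\as|<\varepsilon\}$, and $\{\as\in\ba'\}$, on which one obtains the explicit deterministic bound $\sum_{k:\,\mathcal{A}_{k}\in\ba'}w'_{k}|\mathcal{A}_{k}\nabla\as|\leq \varepsilon/(1-\varepsilon)$. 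Since the probability of this intersection approaches one for every $\varepsilon>0$, the desired convergence in probability follows, and an appeal to Theorem~\ref{thm:Under-the-assumption}(b) completes the argument.
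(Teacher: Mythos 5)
Your proof is correct. The paper states Corollary~\ref{corollary} without giving an explicit proof, and your argument is the natural one: consistency of $\mathbf{w}$ on the all-subset collection forces $w_{\as}\overset{p}{\to}1$, path-consistency gives $P(\as\in\ba')\to 1$, so on the intersection of the high-probability events the renormalized weighting $\mathbf{w}'$ satisfies $\sum_{k:\,\mathcal{A}_k\in\ba'}w'_k|\mathcal{A}_k\nabla\as|\leq\varepsilon/(1-\varepsilon)$, i.e.\ $\mathbf{w}'$ is itself consistent on $\ba'$, and Theorem~\ref{thm:Under-the-assumption}(b) then yields the stated $\min/\max$ limits; your $\varepsilon$-event bookkeeping correctly handles the randomness of $\ba'$ and the positivity of the renormalization denominator.
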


\subsection{Software}

We provide our implementation of the SOIL importance measure in an official \textbf{R} package \texttt{SOIL}, which is publicly available from the Comprehensive R Archive Network at \url{https://cran.r-project.org/web/packages/SOIL/index.html}. The package is also provided in the supplementary materials.

\section{EXTENSION TO THE BINARY CLASSIFICATION MODEL \label{sub:Extended-to-the}}

We extend the SOIL importance to the binary logistic regression case. Let $Y\in\{0,1\}$ be the response variable 
and $X\in\mathbb{R}^{p}$ be the predictor vector. We assume that $Y$
has a Bernoulli distribution with conditional probabilities
\begin{equation}
\mathrm{Pr}(Y=1|X=\mathbf{x})  = 1-\mathrm{Pr}(Y=0|X=\mathbf{x})=\frac{e^{\mathbf{x}^{\intercal}\boldsymbol{\beta^{*}}}}{1+e^{\mathbf{x}^{\intercal}\boldsymbol{\beta^{*}}}},
\end{equation}
where $\boldsymbol{\beta}^{*}=(\beta_{1}^{*}\ddd\beta_{p}^{*})^{\intercal}$
is the vector corresponding to the true underlying
model.  
The ARM weighting for the logistic regression can be computed by
Algorithm \ref{alg:ARM-weighting-method-1}.

\begin{algorithm}[H]
\begin{itemize}[nolistsep]
\item {\footnotesize{}Randomly split $\mathbf{D}$ into a training set $\mathbf{D}_{1}$
and a test set $\mathbf{D}_{2}$ of equal size. }{\footnotesize \par}
\item {\footnotesize{}For each $\ak\in\ba$, fit a standard logistic regression
of $y$ on $\mathbf{x}_{s}^{(k)}$ using the samples in $\mathbf{D}_{1}$
and get the fitted value $\widehat{p}^{(k)}(\mathbf{x}_{s}^{(k)})$,
\begin{align*}
\widehat{p}^{(k)}(\mathbf{x}_{s}^{(k)}) & \equiv\mbox{Pr}(Y=1|X_{s}^{(k)}=\mathbf{x}_{s}^{(k)})\\
 & =\exp(\mathbf{x}_{s}^{(k)\intercal}\widehat{\boldsymbol{\beta}}_{s}^{(k)})/(1+\exp(\mathbf{x}_{s}^{(k)\intercal}\widehat{\boldsymbol{\beta}}_{s}^{(k)})),\qquad k=1\ddd K.
\end{align*}
}{\footnotesize \par}
\item {\footnotesize{}For each $\ak$, compute the prediction $\widehat{p}^{(k)}(\mathbf{x}_{s}^{(k)})$
on the test set $\mathbf{D}_{2}$. }{\footnotesize \par}
\item {\footnotesize{}Compute the weight $w_{k}$ for each candidate model:
\[
w_{k}=\frac{e^{-\psi C_{k}}\prod_{i\in\mathbf{D}_{2}}\widehat{p}^{(k)}(\mathbf{x}_{s,i}^{(k)})^{y_{i}}\left(1-\widehat{p}^{(k)}(\mathbf{x}_{s,i}^{(k)})\right)^{1-y_{i}}}{\sum_{l=1}^{K}e^{-\psi C_{l}}\prod_{i\in\mathbf{D}_{2}}\widehat{p}^{(l)}(\mathbf{x}_{s,i}^{(l)})^{y_{i}}\left(1-\widehat{p}^{(l)}(\mathbf{x}_{s,i}^{(l)})\right)^{1-y_{i}}},
\]
for $k=1\ddd K$, where $C_{k}=s_{k}\log{\frac{e\cdot p}{s_{k}}}+2\log(s_{k}+2)$.}{\footnotesize \par}
\item {\footnotesize{}Repeat the steps above (with random data splitting)
$L$ times to get $w_{k}^{(l)}$ for $l=1\ddd L$, and get $w_{k}=\frac{1}{L}\sum_{l=1}^{L}w_{k}^{(l)}$. }{\footnotesize \par}
\end{itemize}
\caption{{\footnotesize The procedure of the ARM weighting for the binary classification case.}\label{alg:ARM-weighting-method-1}}
\end{algorithm}

\subsection{Weighting using information criteria with nonuniform priors}

Similarly, the weight $w_{k}$ for model $\ak\in\ba$ using BIC-p the information
criterion can be computed in the same way as in \eqref{eq:bic}
where $I_{k}^{\mathrm{BIC}}=-2\log\ell_{k}+2s_{k}\log n$, with \textbf{$s_{k}=|\ak|$}
and $\ell_{k}$ being the maximized likelihood function for the logistic
model $\ak$.

%\section{Numerical Study }

\section{SIMULATIONS}

In this section, we consider a number of simulation settings to highlight the
properties of SOIL in contrast to some other importance measures.
We compare SOIL using the ARM and BIC-p weighting with three variable importance alternatives, which are denoted as LMG,
RFI1 and RFI2. LMG is the relative importance measure by averaging
over all possible orderings for $R^{2}$ decomposition \citep{lindeman1980introduction}.
RFI1 and RFI2 are importance measures in random forests proposed by \citet{breiman2001random}.
Specifically, RFI1 is computed from a normalized difference between
the prediction error on the out-of-bag (OOB) portion of the data and
that on the permuted OOB data for each predictor variable. RFI2 is
the total decrease in node impurities from splitting on a particular variable,
averaged over all trees. The node impurity is defined by the Gini
index for classification, and by residual sum of squares
for regression. Computationally, LMG can be obtained by the R implementation
\texttt{relaimpo} \citep{gromping2006relative}, while RFI1 and RFI2
can be obtained by R implementation \texttt{randomForest} \citep{Liaw2002}.
Since LMG can only handle the linear case with up to about 20 variables
due to its computational limitation, we are not able to get the relative
importance LMG in some of our examples. The choice of the prior $\psi$ for the ARM
and BIC-p weighting can be specified by the users. To avoid
cherry-picking, we present the results with a fixed choice: $\psi=0.5$.
Our experience is that $\psi=0.5$ or $1$ generally works quite well.

In the following we compare different variable importance measures
for Gaussian and Binomial cases under various settings of sample sizes,
dimensions and feature correlations.

\paragraph{Model 1: Gaussian.}

The simulation data $\{y_{i},\mathbf{x}_{i}\}_{i=1}^{n}$ is generated
from the linear model $y_{i}=\mathbf{x}_{i}^{\intercal}\boldsymbol{\beta}^{*}+\epsilon_{i}$,
$\epsilon_{i}\sim N(0,\sigma^{2})$ and $\sigma\in\{0.1,5\}$. We generate
$\mathbf{x}_{i}$ from multivariate normal distribution $N_{p}(0,\Sigma)$.
For each element $\Sigma_{ij}$ of $\Sigma$, $\Sigma_{ij}=\rho^{\left|i-j\right|}$,
i.e. the correlation of $X_{i}$ and $X_{j}$ is $\rho^{\left|i-j\right|}$,
with $\rho\in\{0,0.9\}$.

\paragraph{Model 2: Binomial.}

The i.i.d. sample $\{y_{i},\mathbf{x}_{i}\}_{i=1}^{n}$ is generated
from the binomial model $\mathrm{logit}(p_{i})=\mathbf{x}_{i}^{\intercal}\boldsymbol{\beta^{*}}$,
where $p_{i}=P(Y=1|X=\mathbf{x}_{i})$. And $\mathbf{x}_{i}$ is generated
in the same way as the Gaussian case.

We summarize in Table \ref{tab:simu settings} the model settings
adopted in this simulation. For each model setting with a specific
choice of the parameters $(\rho,\sigma^{2})$, we repeat the simulation
100 times and compute the averaged variable importance measures for
SOIL-BIC-p, SOIL-ARM, LMG, RFI1 and RFI2. Due to page restrictions, the figures of Example \ref{fig:resS1}--\ref{fig:resS5} as defined in Table \ref{tab:simu settings} are only provided in the supplementary materials, while the summary of all the examples are discussed in the main part of the paper. 

The results for the simulations are shown in Figure
\ref{fig:res1}--\ref{fig:res6} and Figure \ref{fig:resS1}--\ref{fig:resS5} in the Supplementary Materials Part B. For the scaling of the importance measures, we standardize RFI1 and RFI2, dividing them by their respective maximum value of the variable importance among all the variables for each realization of the data. As a result, in each figure, we can see that the maximum value of RFI1 or RFI2 (after the standardization) is always one. For SOIL and LMG, we keep their original values as being proposed. The fact that the LMG importance values sum to one over the variables should be kept in mind when comparing the different importance measures on the graphs.  

\begin{table}[H]
\ra{0.8} 
\begin{centering}
{\footnotesize{}}%
\begin{tabular}{llll}
\toprule 
{\footnotesize{}Example } & {\footnotesize{}$n$ } & {\footnotesize{}$p$ } & {\footnotesize{}Model Settings}\tabularnewline
\midrule 
\multicolumn{4}{c}{{\footnotesize{}Gaussian Case}}\tabularnewline
{\footnotesize{}1 } & {\footnotesize{}100 } & {\footnotesize{}200 } & {\footnotesize{}$\boldsymbol{\beta}^{*}=(4,4,4,-6\sqrt{2},\frac{3}{4},0,...,0)^{\intercal}$}\tabularnewline
{\footnotesize{}2 } & {\footnotesize{}150 } & {\footnotesize{}14+1 } & {\footnotesize{}}%
\parbox[t]{11cm}{%
{\footnotesize{}$\boldsymbol{\beta}^{*}=(4,4,4,-6\sqrt{2},\frac{3}{4},0,...,0)^{\intercal}$.
Add $X_{15}=0.5X_{1}+2X_{4}+e$ and $\beta_{15}^{*}=0$, where $e\sim N(0,0.01)$.}%
}\tabularnewline

{\footnotesize{}3 } & {\footnotesize{}150 } & {\footnotesize{}8 } & {\footnotesize{}$\boldsymbol{\beta}^{*}=(0\ddd0){}^{\intercal}$}\tabularnewline
{\footnotesize{}4 } & {\footnotesize{}150 } & {\footnotesize{}8 } & {\footnotesize{}$\boldsymbol{\beta}^{*}=(1\ddd1){}^{\intercal}$}\tabularnewline
{\footnotesize{}S1 } & {\footnotesize{}150 } & {\footnotesize{}20 } & {\footnotesize{}$\boldsymbol{\beta}^{*}=(4,4,4,-6\sqrt{2},\frac{3}{4},0,...,0)^{\intercal}$}\tabularnewline
{\footnotesize{}S2 } & {\footnotesize{}150 } & {\footnotesize{}6+6 } & {\footnotesize{}}%
\parbox[t]{11cm}{%
{\footnotesize{}$\boldsymbol{\beta}^{*}=(4,4,-6\sqrt{2},\frac{3}{4},0,0)^{\intercal}$.
Add $(X_{1}^{2},X_{2}^{2},X_{3}^{2},X_{4}^{2},X_{5}^{2},X_{6}^{2})$
and corresponding coefficients $(\beta_{7}^{*},\beta_{8}^{*}\ddd\beta_{12}^{*})^{\intercal}=(4,0,1,0,0,0)^{\intercal}$.}%
}\tabularnewline
{\footnotesize{}S3 } & {\footnotesize{}150 } & {\footnotesize{}6+6 } & {\footnotesize{}}%
\parbox[t]{11cm}{%
{\footnotesize{}$\boldsymbol{\beta}^{*}=(4,4,-6\sqrt{2},\frac{3}{4},0,0)^{\intercal}$.
}\\
{\footnotesize{}Add $(X_{1}X_{2},X_{1}X_{3},X_{1}X_{4},X_{2}X_{3},X_{2}X_{4},X_{3}X_{4})$
and corresponding coefficients $(\beta_{7}^{*},\beta_{8}^{*}\ddd\beta_{12}^{*})^{\intercal}=(4,2,2,0,0,0)^{\intercal}$.}%
}\tabularnewline
\midrule 
\multicolumn{4}{c}{{\footnotesize{}Binomial Case}}\tabularnewline
{\footnotesize{}5 } & {\footnotesize{}80} & {\footnotesize{}6 } & {\footnotesize{}$\boldsymbol{\beta}^{*}=\left(1,\frac{1}{2},\frac{1}{3},\frac{1}{4},\frac{1}{5},\frac{1}{6},0\right){}^{\intercal}$}\tabularnewline
{\footnotesize{}6 } & {\footnotesize{}5000 } & {\footnotesize{}6 } & {\footnotesize{}$\boldsymbol{\beta}^{*}=\left(1,\frac{1}{2},\frac{1}{3},\frac{1}{4},\frac{1}{5},\frac{1}{6},0\right){}^{\intercal}$}\tabularnewline
{\footnotesize{}S4} & {\footnotesize{}150 } & {\footnotesize{}20 } & {\footnotesize{}$\boldsymbol{\beta}^{*}=(4,4,4,-6\sqrt{2},\frac{3}{4},0,...,0){}^{\intercal}$}\tabularnewline
{\footnotesize{}S5} & {\footnotesize{}100 } & {\footnotesize{}200 } & {\footnotesize{}$\boldsymbol{\beta}^{*}=(4,4,4,-6\sqrt{2},\frac{3}{4},0,...,0){}^{\intercal}$}\tabularnewline
\bottomrule
\end{tabular}
\par\end{centering}{\footnotesize \par}
\caption{Simulation settings\label{tab:simu settings}}
\end{table}

\subsection{Relative performances of importance measures in several key aspects\label{sec:property}}

A summary of the relevant properties of different important measures
is provided in Table \ref{tab:Comparison-of-the}.  In the following
we discuss point-by-point these characteristics for the importance
measures in comparison. For convenience, we call the variables with nonzero coefficients the ``true'' variables.

\begin{table}
\ra{0.8}
\begin{centering}
{\footnotesize{}}%
\begin{tabular}{lccccc}
\toprule 
 & {\footnotesize{}SOIL-ARM } & {\footnotesize{}SOIL-BIC-p } & {\footnotesize{}LMG } & {\footnotesize{}RFI1 } & {\footnotesize{}RFI2}\tabularnewline
\midrule 
{\footnotesize{}Inclusion/Exclusion } & {\footnotesize{}$\checkmark$ } & {\footnotesize{}$\checkmark$ } &  &  & \tabularnewline
{\footnotesize{}Tuning in to information } & {\footnotesize{}$\checkmark$ } & {\footnotesize{}$\checkmark$ } &  &  & \tabularnewline
{\footnotesize{}Robustness to feature correlation } & {\footnotesize{}$\checkmark$ } & {\footnotesize{}$\checkmark$ } &  &  & \tabularnewline
{\footnotesize{}Robustness against confuser } & {\footnotesize{}$\checkmark$ } & {\footnotesize{}$\checkmark$ } &  &  & \tabularnewline
{\footnotesize{}Sensitivity to high-order terms } & {\footnotesize{}$\checkmark$ } & {\footnotesize{}$\checkmark$ } &  &  & \tabularnewline
{\footnotesize{}Pure relativeness } &  &  & {\footnotesize{}$\checkmark$ } & {\footnotesize{}$\checkmark$ } & {\footnotesize{}$\checkmark$}\tabularnewline
{\footnotesize{}Order preserving } & {\footnotesize{}$\checkmark$ } & {\footnotesize{}$\checkmark$ } &  &  & \tabularnewline
{\footnotesize{}High-dimensionality } & {\footnotesize{}$\checkmark$ } & {\footnotesize{}$\checkmark$ } &  & {\footnotesize{}$\checkmark$ } & {\footnotesize{}$\checkmark$}\tabularnewline
{\footnotesize{}Non-parametricness } &  &  &  & {\footnotesize{}$\checkmark$ } & {\footnotesize{}$\checkmark$}\tabularnewline
{\footnotesize{}Non-negativity } & {\footnotesize{}$\checkmark$ } & {\footnotesize{}$\checkmark$ } & {\footnotesize{}$\checkmark$ } &  & {\footnotesize{}\checkmark}\tabularnewline
\bottomrule
\end{tabular}
\par\end{centering}{\footnotesize \par}
\caption{Comparison of the characteristics for the importance measures. A ``$\checkmark$\textquotedblright{}
indicates that a specified method has the given property. A blank
space indicates the absence of a property. \label{tab:Comparison-of-the}}
\end{table}

\paragraph{Inclusion/exclusion.}

The inclusion/exclusion aspect addresses the issue if an importance
measure can give a proper sense if a predictor is likely to be needed
in the best model to describe the data. These two criteria for importance have been discussed in \citet{gromping2015variable}. Recall that given enough data
for SOIL importance, the true variables in the model have large importances
(inclusion) and the variables that are not in the true model have
importances around zero (exclusion). In all examples, we can see that
the SOIL-BIC-p and SOIL-ARM have inclusion/exclusion properties. For
example in Figure \ref{fig:resS1}, all the true variables $(X_{1}\ddd X_{5})$
have their SOIL importances around one, even though their coefficients are
different, i.e. $(\beta_{1}^{*}\ddd\beta_{5}^{*})=(4,4,4,-6\sqrt{2},\frac{3}{4})$.
In contrast, the other three measures LMG, RFI1 and RFI2 do not have
the inclusion property when $\rho=0$ and $\sigma^{2}=0.01$ (they all undervalue the importance of $X_5$, which has a small coefficient). LMG, RFI1 and RFI2 do not have the exclusion property either. We can see that in Figure \ref{fig:res2} the noise variable $X_{15}$ confuses LMG, RF1 and RF2. In Figure \ref{fig:resS2} when $\rho=0.9$, LMG, RF1 and RF2 assign relatively high values on the noise variable $X_{8}$. In Figure \ref{fig:resS3} when $\rho=0.9$ and $\sigma^{2}=25$, LMG, RF1 and RF2 fail on the noise variable $X_{10}$

%We find that the sample size can affect effectiveness of the inclusion
%and exclusion property of the SOIL importances. 
SOIL is certainly incapable of giving high importance to very weak variables in the true model.
For example Figure \ref{fig:res5} shows that in a binomial model with the decreasing
coefficient vector $\boldsymbol{\beta}^{*}=\left(1,\frac{1}{2},\frac{1}{3},\frac{1}{4},\frac{1}{5},\frac{1}{6},0\right){}^{\intercal}$,
the true variable $X_{6}$'s SOIL importance is only around 0.1, not much above that of the noise variable $X_{7})$. However this problem is alleviated
as the sample increases: Figure \ref{fig:res6} shows that the SOIL-ARM
and SOIL-BIC-p importances of six true variables $(X_{1}\ddd X_{6})$
become closer to one when $n$ increases from 80 to 5000. In contrast,
the LMG, RFI1, and RFI2 stay basically the same as the sample size
increases.

\paragraph{Tuning in to information.}

For high dimensional data, more often than not (to say the
least), sparsity is a reluctant acceptance that the info and/or computational
limit only allows us a simple model for application. The optimal sparsity
should depend on the sample size and noise level. Therefore, it is
desirable to have an importance measure to honor this perspective.
When the sample size increases or the noise decreases, we should have
more information. Thus, the importance obtained from the data
should change due to the enrichment of information. Therefore in most examples,
when the correlation $\rho$ and $\sigma^{2}$ are low, one may hope the variable importances
delineate the true model. Comparing Examples \ref{fig:res5} and \ref{fig:res6}, which differ only in the sample size, as shown in Figure \ref{fig:res5} and Figure \ref{fig:res6}, only SOIL-BIC-p and SOIL-ARM react to the much increased information due to sample size increase, while the other three
importances are not tuned in to the information change.

\paragraph{Robustness to feature correlation.}

SOIL importances show robustness against noise increase and higher feature correlation.
For example in Figure \ref{fig:res1}, \ref{fig:res2} and Figure \ref{fig:resS1}--\ref{fig:resS5} in  Supplementary Materials Part B, even when there is high feature correlation $(\rho=0.9,\sigma^{2}=0.01)$
or strong noise $(\rho=0,\sigma^{2}=25)$ in the data, the SOIL-BIC-p
and SOIL-ARM can still select the true variable $X_{5}$ while the
other methods consider $X_{5}$ as unimportant. But in a case of both high feature correlation and strong noise $(\rho=0.9,\sigma^{2}=25)$,
none of the importance measures in comparison can quite clearly select $X_{5}$
as an important variable because the information is too limited.

\paragraph{Robustness against confusers.}

A confuser refers to a variable that is closely related to a true variable or some linear combination
of the true variables but not to the extent of serving as a valid alternative. An importance measure oriented towards sparse modeling 
should assign near zero importances on the confusers. The simulation results show that the SOIL importance measures
are much more robust to confusers than LMG, RFI1 and RFI2. In Example \ref{fig:res2},
we generate a confuser $X_{15}=0.5X_{1}+2X_{4}+e$ with Gaussian
noise $e\sim N(0,0.01)$. The results in Figure \ref{fig:res2} show
that LMG, RFI1 and RFI2 fail to assign small importance to $X_{15}$ (not in the true model) and view it more important than some true variables. In contrast, small ARM and BIC-p importances
for $X_{15}$ correctly indicate that it is unimportant.

\paragraph{Sensitivity to higher-order terms.}

The SOIL importance measures are more sensitive to inclusion of higher-order terms
in the model. In Example \ref{fig:resS2} and \ref{fig:resS3} we add quadratic terms
$X_{1}^{2}$, $X_{2}^{2}$, $X_{3}^{2}$, $X_{4}^{2}$, $X_{5}^{2}$, $X_{6}^{2}$ and pairwise interactions $X_{1}X_{2}$, $X_{1}X_{3}$, $X_{1}X_{4}$, $X_{2}X_{3}$, $X_{2}X_{4}$, $X_{3}X_{4}$
respectively, where the coefficients for $X_{1}X_{2}$, $X_{1}X_{3}$, $X_{1}X_{4}$
and $X_{1}^{2}$, $X_{3}^{2}$ are nonzero in the true models. Results
in Figure \ref{fig:resS2} and \ref{fig:resS3} show that the ARM and
BIC-p methods can select both true main-effect variables and true higher-order
terms, whereas LMG, RFI1 and RFI2 fail to select some of the main-effect
variables when interactions or quadratic terms are included.

\paragraph{Pure relativity.}

An importance measure is said to be purely relative if the values
individually do not have a sensible meaning on their own. One drawback
of an importance measure with pure relativity is that it does not
differentiate between equal importance and equal unimportance cases.
All coefficients in Example \ref{fig:res3} and \ref{fig:res4} have the same relative size, which
are $\boldsymbol{\beta}^{*}=(0\ddd0){}^{\intercal}$ and $\boldsymbol{\beta}^{*}=(1\ddd1){}^{\intercal}$
respectively. We find that LMG, RFI1 and RFI2 do not offer any clue on importance of each variable itself. Variables $(X_{1}\ddd X_{6})$
in Example \ref{fig:res3} have very similar LMG and RFI2 values to those in Example
\ref{fig:res4}. And RFI1 behaves wildly as it assigns very much different importances to the variables in the independence case ($\rho=0$) of Example \ref{fig:res3}. The importance values are even significantly negative for some variables. In contrast, SOIL-BIC-p and SOIL-ARM nicely separate the two examples. 

\paragraph*{Order preserving.}

Order preserving refers to the property that the importance 
reflects the ``order" of the variables or not: (1) For the true variables (standardized) with not too high correlations with others,
it may be natural to expect the ones with larger coefficients to have larger importances (up to one of course); (2)
The true variables should have larger importances compared to the noise
ones. In the case that the sample size is too small for some true variables to be detectable, the order preserving property demands that the noise variables should not receive significantly higher importance values than these subtle true variables. SOIL-BIC-p and SOIL-ARM exhibit the order preserving property
in all the cases. LMG behaves poorly when there exists a confuser
as in Figure \ref{fig:res2}. RFI1 and RFI2 do not preserve the order
when correlation $\rho=0.9$ and/or noise $\sigma^{2}$ is large.

\paragraph{High-dimensionality.}

SOIL-BIC-p, SOIL-ARM, RFI1 and RFI2 can work for high-dimensional
data when $p>n$ as shown in Figure \ref{fig:res1} and \ref{fig:resS5}. The exclusion
and inclusion properties still hold for SOIL-BIC-p and SOIL-ARM in the
high dimensional case (inclusion of a weak variable requires that $\sigma^{2}$ is not too high). In contrast, LMG does not support high-dimensional
data.

\paragraph{Non-negativity.}

SOIL-BIC-p, SOIL-ARM, LMG and IMG2 always yield non-negative importance value. However, RFI1 does not satisfy this criterion.

\paragraph{Non-parametricness.}

Among the importance measures, only the two from random forest are not limited to parametric modeling.

\begin{figure}
\begin{centering}
\includegraphics[scale=0.64]{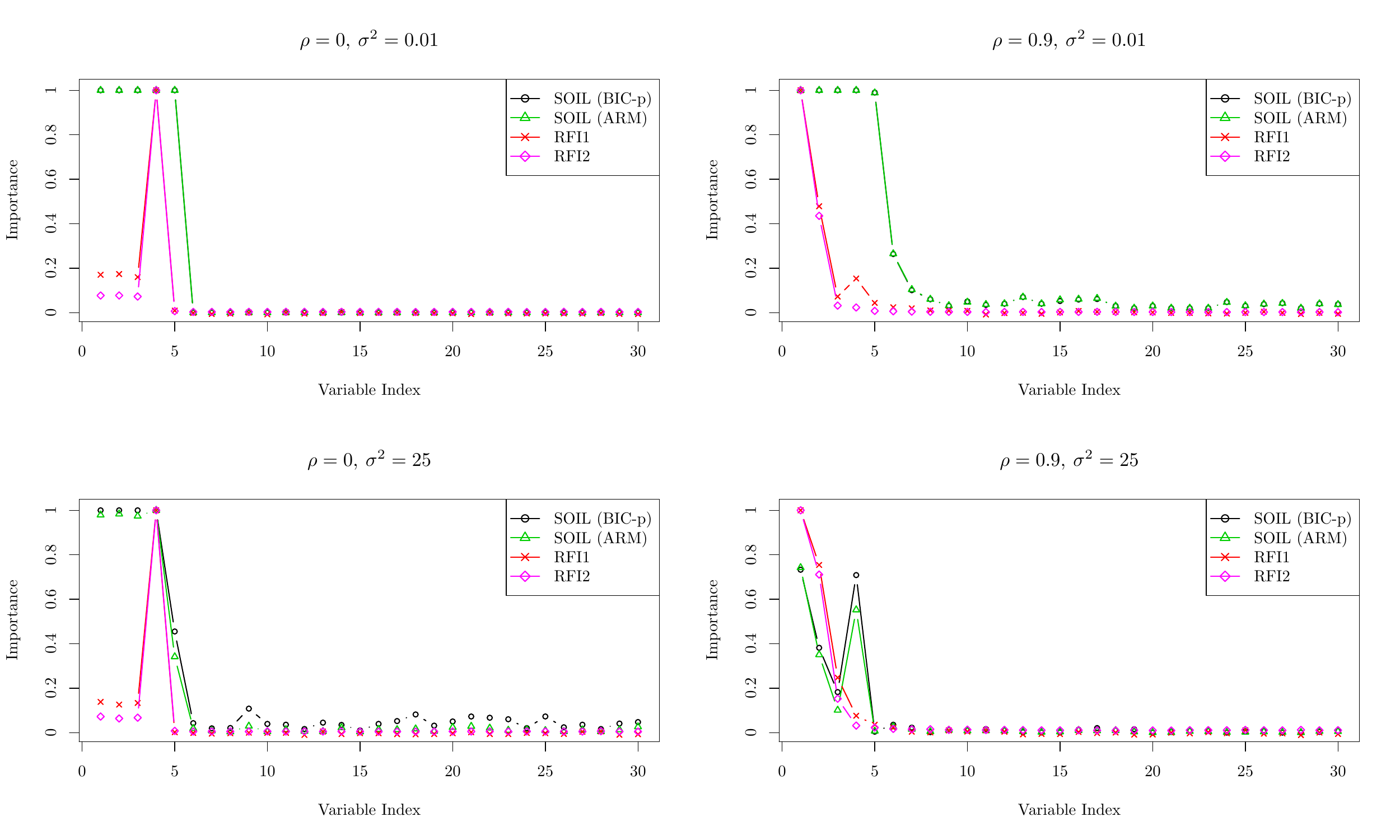}
\par\end{centering}

\caption{Simulation results for Example \ref{fig:res1}, where $n=100$, $p=200$. The true
coefficients $\boldsymbol{\beta}^{*}=(4,4,4,-6\sqrt{2},\frac{3}{4},0,...,0)$.\label{fig:res1}}
\end{figure}

\begin{figure}
\begin{centering}
\includegraphics[scale=0.6]{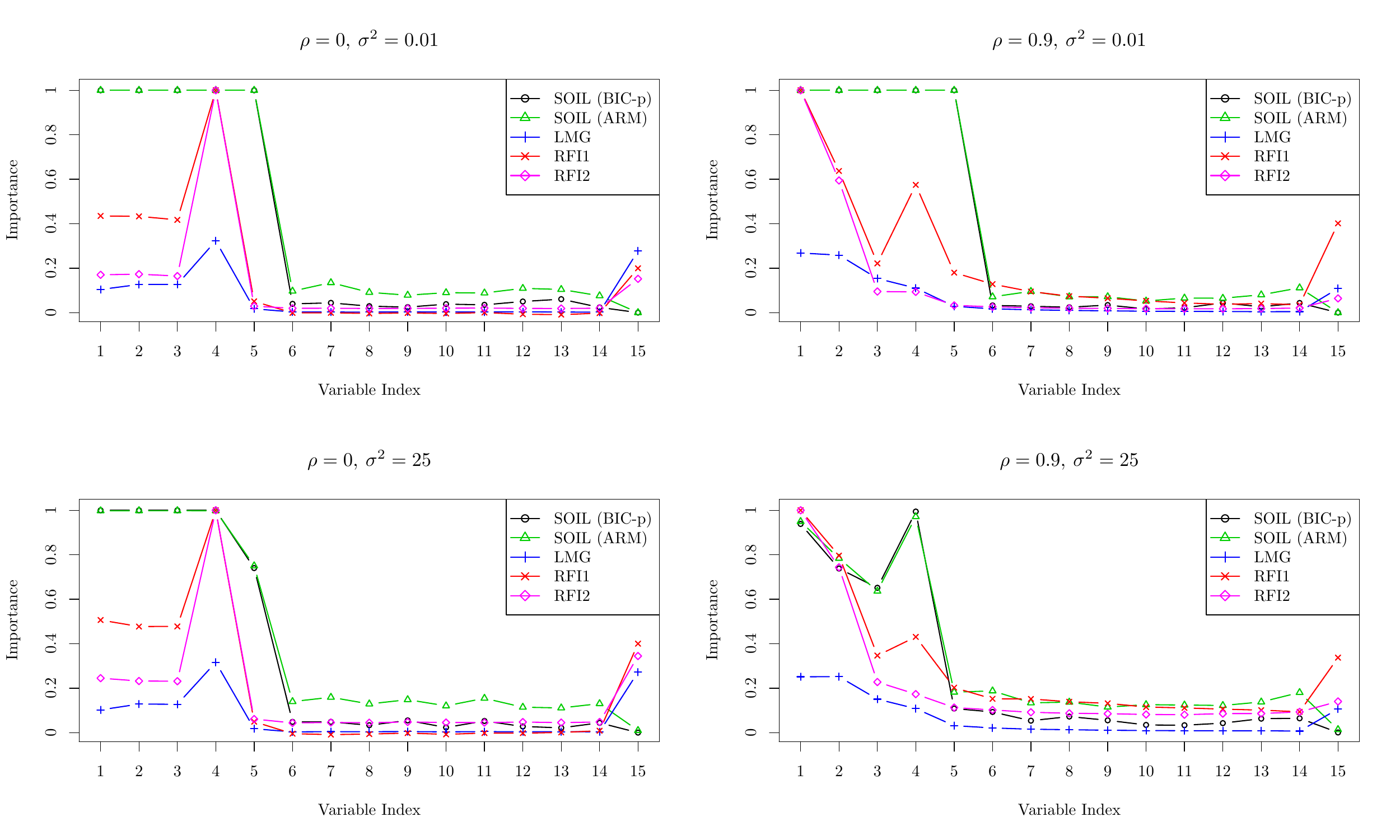}
\par\end{centering}

\caption{Simulation results for Example \ref{fig:res2}, where $n=150$, $p=14$. The true
coefficients $\boldsymbol{\beta}^{*}=(4,4,4,-6\sqrt{2},\frac{3}{4},0,...,0)$.
Add $X_{15}=0.5*X_{1}+2*X_{4}+e$ and corresponding $\beta_{15}^{*}=0$,
where $e\sim N(0,\sigma_{e}^{2})$.\label{fig:res2}}
\end{figure}

\begin{figure}
\begin{centering}
\includegraphics[scale=0.64]{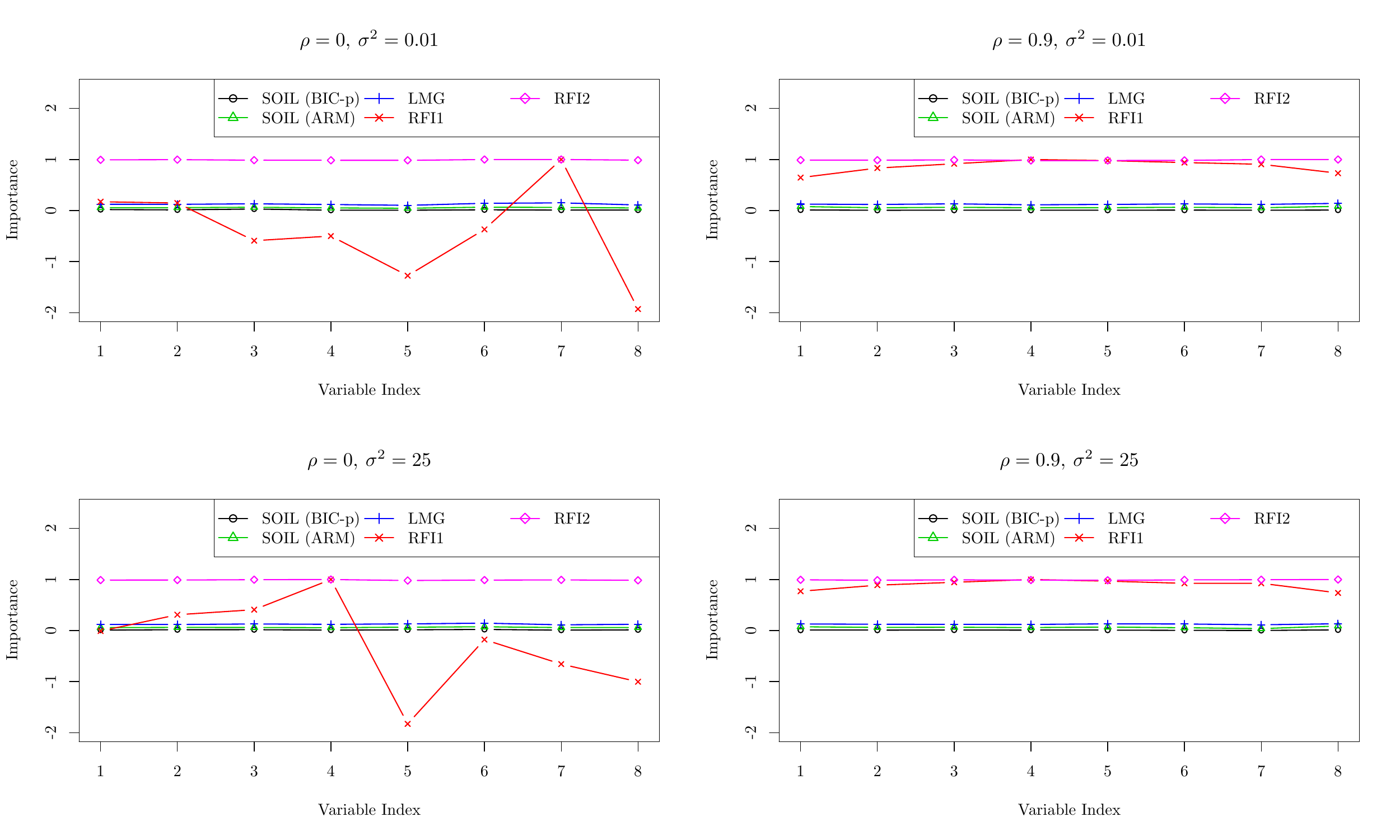}
\par\end{centering}

\caption{Simulation results for Example \ref{fig:res3}, where $n=150$, $p=8$. The true
coefficients $\boldsymbol{\beta}^{*}=(0\protect\ddd0)^{\intercal}$.\label{fig:res3}}
\end{figure}

\begin{figure}
\begin{centering}
\includegraphics[scale=0.64]{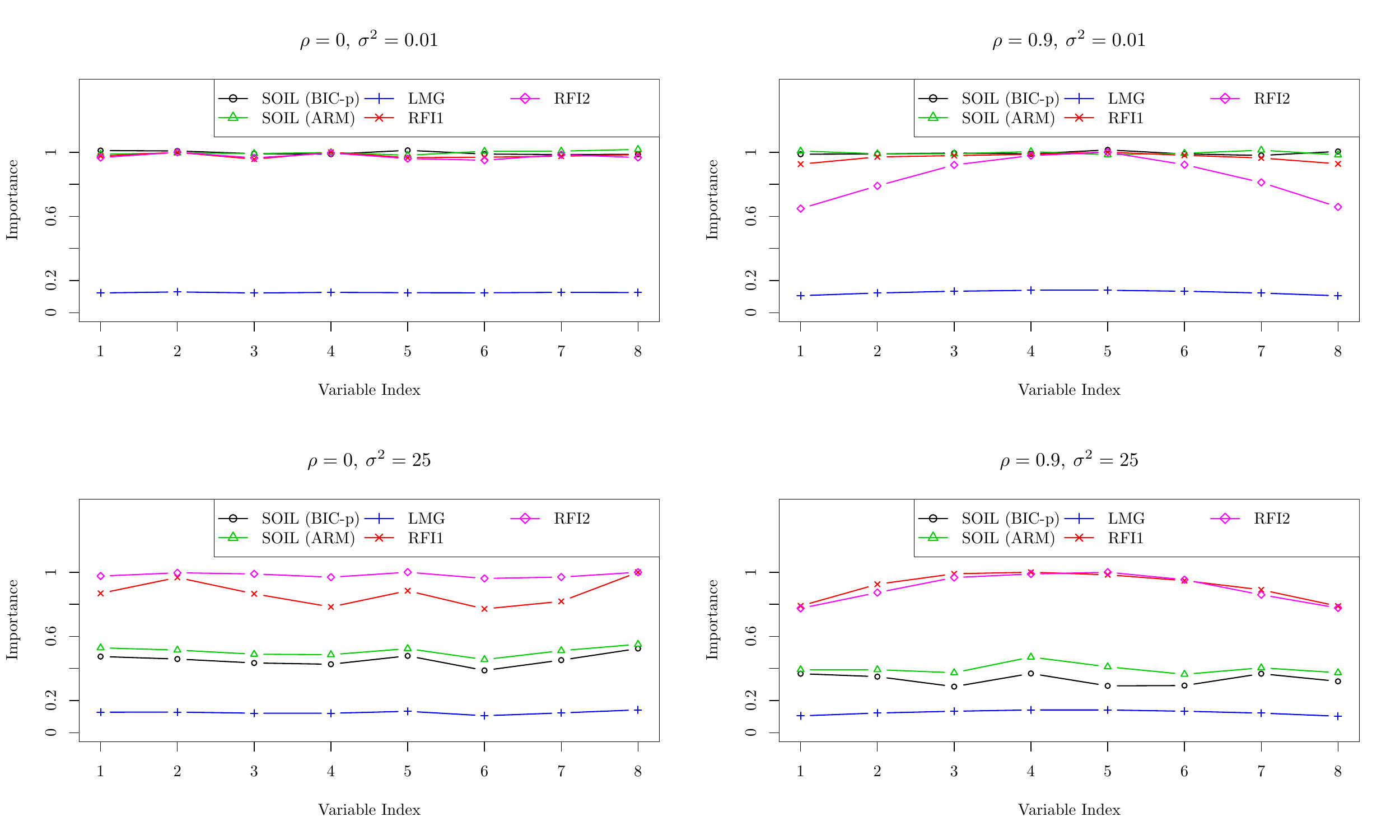}
\par\end{centering}

\caption{Simulation results for Example \ref{fig:res4}, where $n=150$, $p=8$. The true
coefficients $\boldsymbol{\beta}^{*}=(1\protect\ddd1)^{\intercal}$.\label{fig:res4}}
\end{figure}

\begin{figure}
\begin{centering}
\includegraphics[scale=0.64]{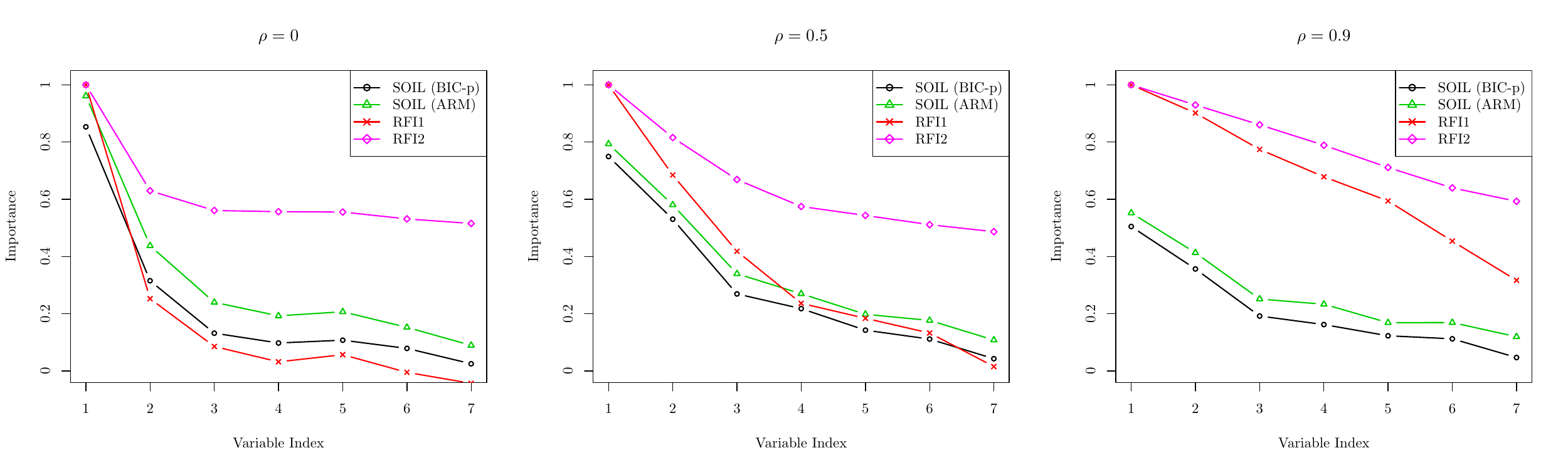}
\par\end{centering}

\caption{Simulation results for Example \ref{fig:res5}, where $n=80$, $p=6$. The true
coefficients $\boldsymbol{\beta}^{*}=\left(1,\frac{1}{2},\frac{1}{3},\frac{1}{4},\frac{1}{5},\frac{1}{6},0\right){}^{\intercal}$.\label{fig:res5}}
\end{figure}

\begin{figure}
\ra{0.8}
\begin{centering}
\includegraphics[scale=0.64]{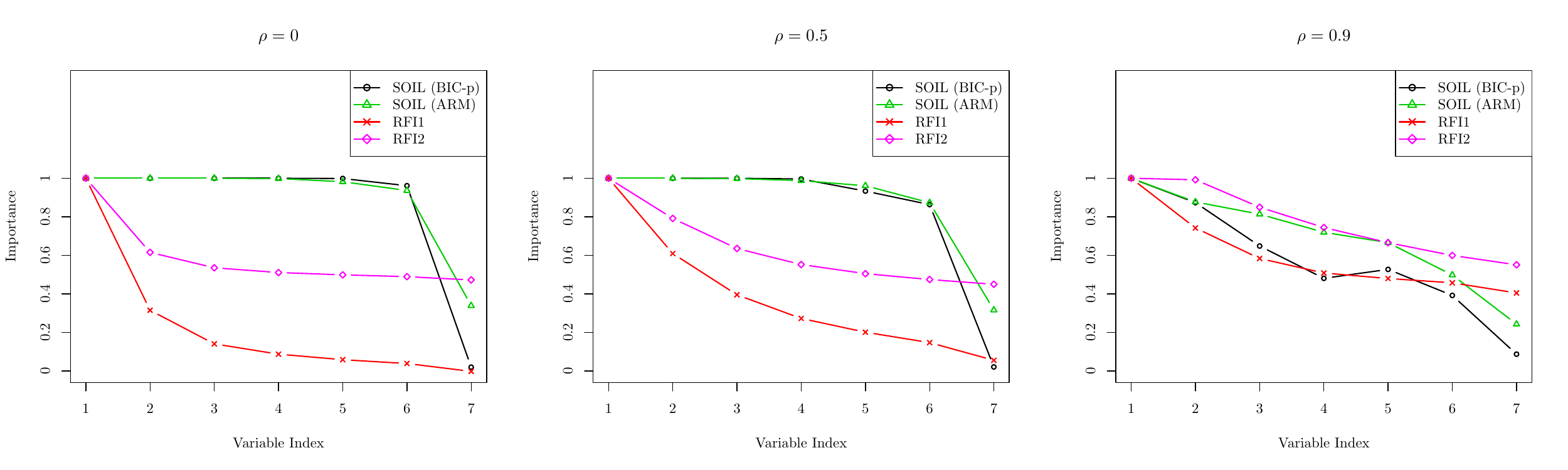}
\par\end{centering}

\caption{Simulation results for Example \ref{fig:res6}, where $n=5000$, $p=6$. The true
coefficients $\boldsymbol{\beta}^{*}=\left(1,\frac{1}{2},\frac{1}{3},\frac{1}{4},\frac{1}{5},\frac{1}{6},0\right){}^{\intercal}$.\label{fig:res6}}
\end{figure}

\subsection{Comparison with stability selection}
\citet{meinshausen2010stability} proposed a stability selection (SS) method to improve the Lasso variable selection. SS may be regarded as an importance measure. In Supplementary Materials Part C, we present a comparison of SS importance to our SOIL approach. Due to worse performances of SS compared with SOIL, together with the fact that
the main goal of SS is not on variable importance, we do not 
consider SS in our main simulation.

\section{REAL DATA EXAMPLES \label{sec:realdata}}

We apply the variable importance measures to two real datasets:

\subsubsection*{BGS data.}

We first consider a dataset with small $p$ from
the Berkeley Guidance Study (BGS) by \citet{tuddenham1954physical}.
The dataset includes 66 registered newborn boys whose physical growth
measures are followed for 18 years. Following \citet[p.179]{cook2009applied}
we consider a regression model of age 18 height on $p=6$ predictors:
weights at ages two (WT2) and nine (WT9), heights at ages two (HT2) and nine
(HT9), age nine leg circumference (LG9), and age 18 strength (ST18).
The corresponding SOIL-ARM, SOIL-BIC-p, LMG, RFI1 and RFI2 importances for each variable are computed and summarized
in Table \ref{fig:Importances-of-the BGS}. We found that HT9 is the
most important variable according to all methods. But different methods 
produce different second-most important variables.

\begin{table}[H]
\ra{0.8}
\begin{centering}
{\footnotesize{}{}}%
\begin{tabular}{lr@{\extracolsep{0pt}.}lr@{\extracolsep{0pt}.}lr@{\extracolsep{0pt}.}lr@{\extracolsep{0pt}.}lr@{\extracolsep{0pt}.}lr@{\extracolsep{0pt}.}l}
\toprule 
 & \multicolumn{2}{c}{{\footnotesize{}{}WT2}} & \multicolumn{2}{c}{{\footnotesize{}{}HT2}} & \multicolumn{2}{c}{{\footnotesize{}{}WT9}} & \multicolumn{2}{c}{{\footnotesize{}{}HT9}} & \multicolumn{2}{c}{{\footnotesize{}{}LG9}} & \multicolumn{2}{c}{{\footnotesize{}{}ST18}}\tabularnewline
\midrule 
{\footnotesize{}{}SOIL-ARM } & {\footnotesize{}{}0} & {\footnotesize{}{}16 } & {\footnotesize{}{}0} & {\footnotesize{}{}09 } & {\footnotesize{}{}0} & {\footnotesize{}{}03 } & \textbf{\footnotesize{}{}1} & \textbf{\footnotesize{}{}00}{\footnotesize{} } & \textbf{\footnotesize{}{}0} & \textbf{\footnotesize{}{}62}{\footnotesize{} } & {\footnotesize{}{}0} & {\footnotesize{}{}28}\tabularnewline
{\footnotesize{}{}SOIL-BIC-p } & {\footnotesize{}{}0} & {\footnotesize{}{}01 } & {\footnotesize{}{}0} & {\footnotesize{}{}00 } & {\footnotesize{}{}0} & {\footnotesize{}{}00 } & \textbf{\footnotesize{}{}1} & \textbf{\footnotesize{}{}00}{\footnotesize{} } & \textbf{\footnotesize{}{}0} & \textbf{\footnotesize{}{}63}{\footnotesize{} } & {\footnotesize{}{}0} & {\footnotesize{}{}08}\tabularnewline
{\footnotesize{}{}LMG } & {\footnotesize{}{}0} & {\footnotesize{}{}06 } & \textbf{\footnotesize{}{}0} & \textbf{\footnotesize{}{}13}{\footnotesize{} } & {\footnotesize{}{}0} & {\footnotesize{}{}08 } & \textbf{\footnotesize{}{}0} & \textbf{\footnotesize{}{}65}{\footnotesize{} } & {\footnotesize{}{}0} & {\footnotesize{}{}05 } & {\footnotesize{}{}0} & {\footnotesize{}{}02}\tabularnewline
{\footnotesize{}{}RFI1 } & {\footnotesize{}{}1} & {\footnotesize{}{}72 } & {\footnotesize{}{}2} & {\footnotesize{}{}50 } & {\footnotesize{}{}1} & {\footnotesize{}{}79 } & \textbf{\footnotesize{}{}55} & \textbf{\footnotesize{}{}66}{\footnotesize{} } & \textbf{\footnotesize{}{}4} & \textbf{\footnotesize{}{}12}{\footnotesize{} } & {\footnotesize{}{}1} & {\footnotesize{}{}05}\tabularnewline
{\footnotesize{}{}RFI2 } & {\footnotesize{}{}70} & {\footnotesize{}{}89 } & {\footnotesize{}{}101} & {\footnotesize{}{}58 } & {\footnotesize{}{}100} & {\footnotesize{}{}52 } & \textbf{\footnotesize{}{}2126} & \textbf{\footnotesize{}{}64}{\footnotesize{} } & {\footnotesize{}{}123} & {\footnotesize{}{}52 } & \textbf{\footnotesize{}{}127} & \textbf{\footnotesize{}{}74}\tabularnewline
\bottomrule
\end{tabular}
\par\end{centering}{\footnotesize \par}

\caption{Importances measures of the variables in BGS data. The top two most
important variables according to each measure are in bold. \label{fig:Importances-of-the BGS}}
\end{table}

Then we conduct a ``credibility check" for the above results of various importance measures. To do so we use a guided simulation or cross-examination \citep{li2000interactive,rolling2014model}, in which the performances of the importance measures are tested using data that are simulated from models recommended by the importance measures respectively. The basic idea of cross-examination is that one usually anticipates that a good method should have a better performance than other methods on the simulated data that are constructed from the method itself. In our context, if we compute the variable importances $S_{1}^{A}\ddd S_{p}^{A}$ on a real dataset using measure $A$, and construct a suggested model (with top rated important variables) and simulate a new dataset from this model, then on the new dataset, the variable importances $\tilde{S}_{1}^{A}\ddd \tilde{S}_{p}^{A}$ using measure $A$ should be more similar  to $S_{1}^{A}\ddd S_{p}^{A}$ than the variable importances $\tilde{S}_{1}^{B}\ddd \tilde{S}_{p}^{B}$ using measure $B$. Otherwise, one can naturally question the adequacy of applying measure $A$ to the original real data.

The cross-examination procedure is as follows:
\begin{enumerate}
\item Choose one measure from SOIL-ARM, SOIL-BIC-p, LMG, RFI1 and RFI2 as the base measure, and select the resulting top two most important variables (e.g. HT9 and LG9 if SOIL-ARM is the base measure).
\item Fit linear regression using only the selected variables as predictors, and obtain the estimated coefficients $\boldsymbol{\widehat{\beta}}$
and standard deviation $\widehat{\sigma}$. 
\item Generate the new response according to the model: ${\bf Y}_{new}=\mathbf{X}\boldsymbol{\widehat{\beta}}+\widehat{\sigma}N(0,1)$.
\item Compute the SOIL-ARM, SOIL-BIC-p, LMG, RFI1 and RFI2 importance measures using the new dataset $({\bf X},{\bf Y}_{new})$.
\item Repeat the above steps 100 times and take the average of each importance.
\item Go to Step 1 until all measures have served as the base measure.
\end{enumerate}
\begin{figure}
\begin{centering}
\includegraphics[scale=0.60]{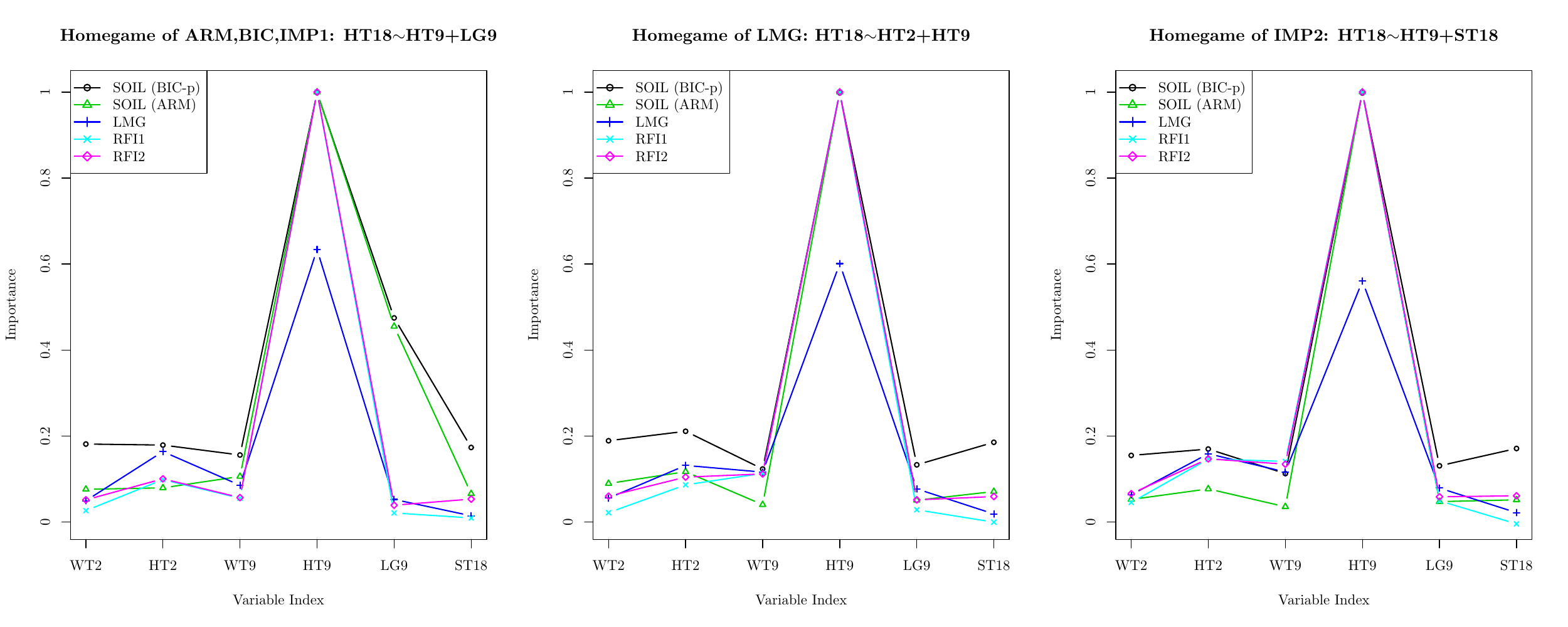}
\par\end{centering}

\caption{Results of cross-examination for BGS data.\label{fig:result of guided simu BGS}}
\end{figure}

The results are depicted in Figure \ref{fig:result of guided simu BGS}.
Overall, SOIL-ARM and SOIL-BIC-p perform reasonably better than the other importance
measures. In the home-game (when the variable is selected as the base measure) of SOIL-ARM, SOIL-BIC-p and RFI1, we can
see that LMG and random forest (RFI1 or RFI2) do not support the true
variable LG9, while SOIL-ARM or SOIL-BIC-p clearly indicate, correctly,
HT9 and LG9 as the important ones (although with less confidence on LG9). In fact, LMG, RFI1 and RFI2 all view HT2 as more important than LG9, a mistake seemingly caused by the higher correlation of HT2 (0.57) to HT18 than LG9 (0.37). In the home-game of LMG, all
methods single out only HT9 as the most important (but not HT2). However, SOIL-ARM and SOIL-BIC-p
assign the second largest importance to HT2, which is consistent with
the aforementioned Order Preserving property. The random forest importance
measures do not show this property. The home-game of
RFI2 is similar to the home-game of LMG, where the Order Preserving
property still holds for SOIL-ARM and SOIL-BIC-p but not for the others.

\subsubsection*{Bardet data.}

For a dataset with large $p$, we consider the Bardet dataset. It collects tissue samples from the eyes of 120 twelve-week-old male rats, which are the offspring of inter-crossed F1 animals. For
each tissue, the RNAs of 31,042 selected probes are measured by the
normalized intensity valued. The gene intensity values are in log
scale. To investigate the genes that are related to gene TRIM32,
which causes the Bardet-Biedl syndrome according to \citet{chiang2006homozygosity}, a screening method \citep{huang2008adaptive} is applied to the original probes,
which gives us a dataset with 200 probes for each of 120 tissues.
We use this screened dataset to carry out our importance measure analysis.

Since LMG is not feasible to handle cases with $p>20$, it is not included in our analysis below. The corresponding SOIL-ARM, SOIL-BIC-p, RFI1 and RFI2 importances for most relevant variable are summarized in Table \ref{fig:Top-10-genes}.   We present the top ten variables according to the different importance measures respectively. The name of each gene is too long, so for convenience
we record the corresponding EST number instead. From Table \ref{fig:Top-10-genes},
we can see that different importance measures have very different results.

\begin{table}[H]
\ra{0.8}
\begin{centering}
{\footnotesize{}{}}%
\begin{tabular}{cllllllll}
\toprule 
{\footnotesize{}{}Rank } & \multicolumn{2}{l}{{\footnotesize{}{}ARM}} & \multicolumn{2}{l}{{\footnotesize{}{}BIC-p}} & \multicolumn{2}{l}{{\footnotesize{}{}RFI1}} & \multicolumn{2}{l}{{\footnotesize{}{}RFI2}}\tabularnewline
\midrule 
{\footnotesize{}{}1 } & \textbf{\footnotesize{}{}25141}{\footnotesize{} } & {\footnotesize{}{}1.000 } & \textbf{\footnotesize{}{}25141}{\footnotesize{} } & {\footnotesize{}{}1.000 } & \textbf{\footnotesize{}{}25141}{\footnotesize{} } & {\footnotesize{}{}5.113 } & \textbf{\footnotesize{}{}21907}{\footnotesize{} } & {\footnotesize{}{}0.061 }\tabularnewline
{\footnotesize{}{}2 } & \textbf{\footnotesize{}{}28967}{\footnotesize{} } & {\footnotesize{}{}0.935 } & \textbf{\footnotesize{}{}28967}{\footnotesize{} } & {\footnotesize{}{}1.000 } & \textbf{\footnotesize{}{}21907}{\footnotesize{} } & {\footnotesize{}{}5.006 } & \textbf{\footnotesize{}{}25141}{\footnotesize{} } & {\footnotesize{}{}0.059 }\tabularnewline
{\footnotesize{}{}3 } & \textbf{\footnotesize{}{}28680}{\footnotesize{} } & {\footnotesize{}{}0.834 } & \textbf{\footnotesize{}{}28680}{\footnotesize{} } & {\footnotesize{}{}0.999 } & \textbf{\footnotesize{}{}11711}{\footnotesize{} } & {\footnotesize{}{}4.875 } & \textbf{\footnotesize{}{}11711}{\footnotesize{} } & {\footnotesize{}{}0.054 }\tabularnewline
{\footnotesize{}{}4 } & \textbf{\footnotesize{}{}30141}{\footnotesize{} } & {\footnotesize{}{}0.576 } & \textbf{\footnotesize{}{}30141}{\footnotesize{} } & {\footnotesize{}{}0.491 } & \textbf{\footnotesize{}{}11719}{\footnotesize{} } & {\footnotesize{}{}4.778 } & \textbf{\footnotesize{}{}25105}{\footnotesize{} } & {\footnotesize{}{}0.041 }\tabularnewline
{\footnotesize{}{}5 } & {\footnotesize{}{}21092 } & {\footnotesize{}{}0.397 } & {\footnotesize{}{}21092 } & {\footnotesize{}{}0.278 } & \textbf{\footnotesize{}{}25105}{\footnotesize{} } & {\footnotesize{}{}4.491 } & \textbf{\footnotesize{}{}24565}{\footnotesize{} } & {\footnotesize{}{}0.036 }\tabularnewline
{\footnotesize{}{}6 } & {\footnotesize{}{}15863 } & {\footnotesize{}{}0.261 } & {\footnotesize{}{}15863 } & {\footnotesize{}{}0.142 } & \textbf{\footnotesize{}{}9303}{\footnotesize{} } & {\footnotesize{}{}4.332 } & \textbf{\footnotesize{}{}28680}{\footnotesize{} } & {\footnotesize{}{}0.035 }\tabularnewline
{\footnotesize{}{}7 } & {\footnotesize{}{}17599 } & {\footnotesize{}{}0.219 } & {\footnotesize{}{}17599 } & {\footnotesize{}{}0.121 } & \textbf{\footnotesize{}{}28680}{\footnotesize{} } & {\footnotesize{}{}4.239 } & \textbf{\footnotesize{}{}25403}{\footnotesize{} } & {\footnotesize{}{}0.034 }\tabularnewline
{\footnotesize{}{}8 } & {\footnotesize{}{}22813 } & {\footnotesize{}{}0.106 } & {\footnotesize{}{}25367 } & {\footnotesize{}{}0.028 } & \textbf{\footnotesize{}{}25425}{\footnotesize{} } & {\footnotesize{}{}3.788 } & \textbf{\footnotesize{}{}9303}{\footnotesize{} } & {\footnotesize{}{}0.033 }\tabularnewline
{\footnotesize{}{}9 } & {\footnotesize{}{}25367 } & {\footnotesize{}{}0.079 } & {\footnotesize{}{}22813 } & {\footnotesize{}{}0.016 } & \textbf{\footnotesize{}{}16569}{\footnotesize{} } & {\footnotesize{}{}3.733 } & \textbf{\footnotesize{}{}22029}{\footnotesize{} } & {\footnotesize{}{}0.032 }\tabularnewline
{\footnotesize{}{}10 } & {\footnotesize{}{}24892 } & {\footnotesize{}{}0.047 } & {\footnotesize{}{}14949 } & {\footnotesize{}{}0.005 } & \textbf{\footnotesize{}{}22029}{\footnotesize{} } & {\footnotesize{}{}3.680 } & \textbf{\footnotesize{}{}24087}{\footnotesize{} } & {\footnotesize{}{}0.030 }\tabularnewline
\bottomrule
\end{tabular}
\par\end{centering}{\footnotesize \par}

\centering{}\caption{Top ten genes for different variable importance measures for Bardet
data.\label{fig:Top-10-genes}}
\end{table}

Notice that $X_{25141}$ is the most important variable according
to Table \ref{fig:Top-10-genes}. Random forest is unstable in the sense that each
time we compute the random forest importance on the data, the top ten variables obtained
tended to be quite different in terms of their rankings. For SOIL-BIC-p and SOIL-ARM,
the top four genes always have the same rank and the importance values are pretty much the same in different runs. Also, a striking feature for the random forest in this data example is that the values of the importances
are quite close to each other and decaying gradually, making it hard to judge which variables are really important. 

We carry out a guided simulation study similar to that for the BGS data, except that LMG is not included. Based on the information in Table \ref{fig:Top-10-genes}, the top 4 variables are selected for SOIL-BIC-p (SOIL-ARM), and the top 10 for RFI1 and RFI2 respectively. 
%We choose 0.5 as the cutoff for SOIL-ARM and SOIL-BIC-p. But we do not have a theoretical reference on how to choose a cutoff for random forest importances. Also, among the first 10 variables, the variables importances seem close to each other. So we keep them all as important variables in the cross-examination.

In Figure \ref{fig:Simu_colon}, we only present the variable importances of the ``true" genes due to space limitation. RFI1 and RFI2 are all normalized. In the home-game of SOIL-ARM and SOIL-BIC-p, both can correctly select
all the true variables if  the cut-off value is set at $0.4$. For random forest, however, the maximum RFI1 and RFI2 values among the unimportant ones exceed the most important
ones respectively, indicating that the random forest has difficulty differentiating the really important and unimportant variables.  

In the home-game of RFI1 and RFI2, none of the competitors performs very well. With the generating model being larger, with the limited information in the data (in conjunction with the complicated correlation
among the genes), the importance measures simply cannot reveal all
the true variables. Only the true variable $X_{25414}$
is differentiated clearly by all methods. From the SOIL perspective, it is willing to support at most 3 more variables with some confidence. Random forest gives more true variables significant importance values. A drawback is that some noise variables receive relatively large importance values, which are even higher than almost half of the true variables.  

From the guided simulations,  the Order Preserving property fails in all the cases for the random forest importance measures. For SOIL, in the home-game of ARM
and BIC-p, it holds for both SOIL-ARM and SOIL-BIC-p; but in the home-game of RFI1 or RFI2, the property does not  hold exactly, but it does hold in the sense that the maximum importance of the noise variables is still very small (and it is not meaningful to rank the variables with tiny importance values). The key point here is that while SOIL certainly can miss subtle variables in the true model when the sample size is small,  it typically does not recommend an unimportant variable as important. The same cannot be said for the other importance measures.    
\begin{figure}
\begin{centering}
\includegraphics[scale=0.6]{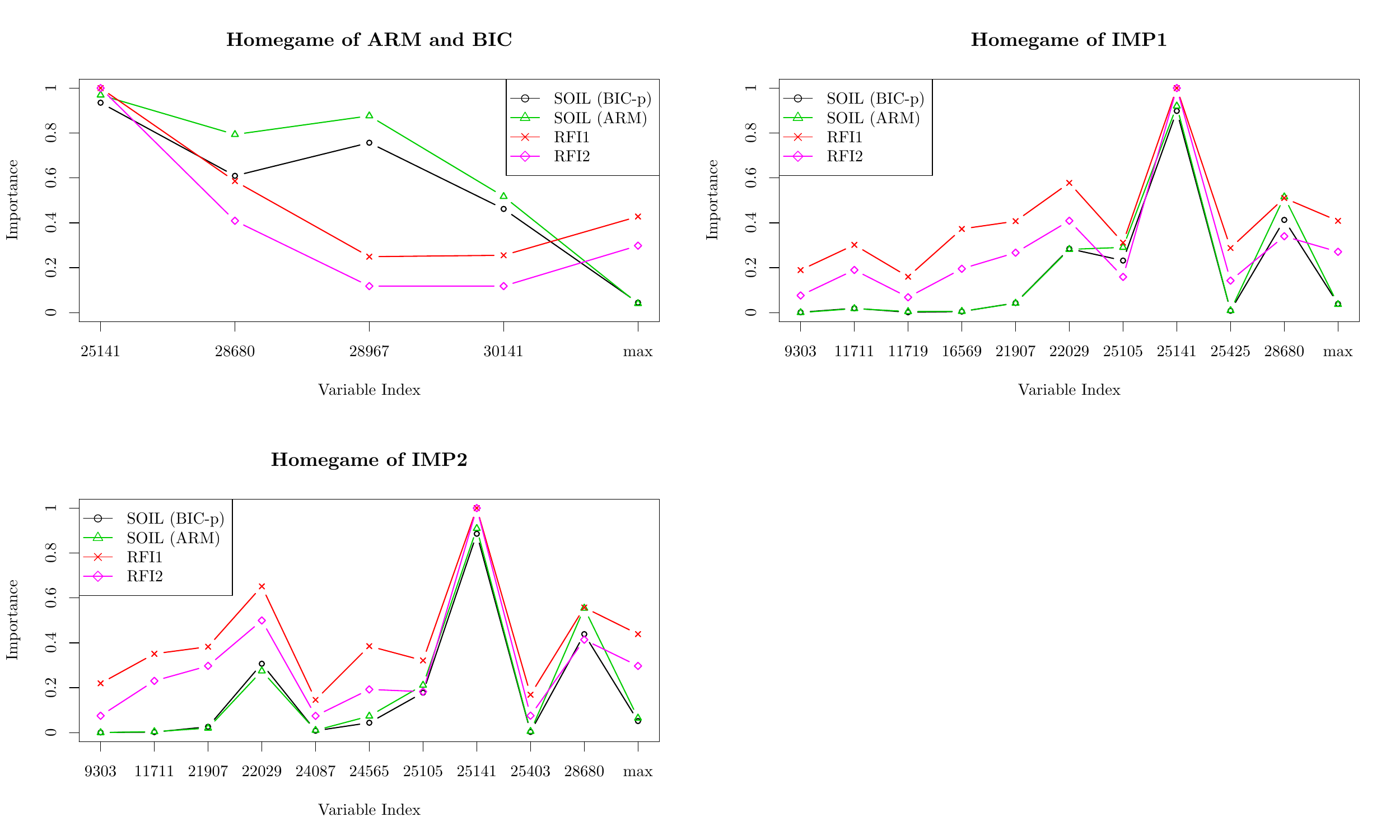}
\par\end{centering}

\caption{Simulation results for cross-examination\label{fig:Simu_colon}}
\end{figure}

\section{CONCLUSION AND DISCUSSION}

Variable importance is aimed to find the important variables for explanation or prediction of the response. 
The motivation is most natural but the task of devising an importance measure is quite tricky. Several challenges immediately arrive: 1. Importance depends on the goal of the analysis and application. Different goals may require different importance measures. 2. Should importance be based on parametric models or nonparametric models? Both seem to be valuable in our view. 3. Should the importance measure be purely relative to compare different variables or should their values have some meaning on their own?

The topic is even controversial, with attitude ranging from enthusiasm in research and/or application, to reluctant acceptance as a practical approach to deal with many predictors, to total pessimism on the topic that dismisses the possibility of general successes. The different opinions are all valid, properly reflecting the complexity and multi-facet nature of the problem.

In our opinion, there are two important facts to keep in mind. One is that people crave for importance measures, love ranking, and they put them in use. This calls for more research on the topic. The other is the currently still dominating practice of ``winner-takes-all", which is definitely a culprit of irreproducibility of many research results.  For reasonably complex data, making inference and decision based on a final selected model can lead to severely biased conclusions. A reliable importance measure can provide much needed complementary information to that from a final model and substantially improve the reliability of data analysis.            

We have investigated the variable
importance in linear regression and classification cases. The proposed new variable
importance measure (SOIL) is driven by model combination for considering
more than a single model, thus giving us an understanding of all the
variables, instead of only the ``important" ones in view of a single model. It is seen from
both the simulation results and the real data examples that the
SOIL approach has several desirable features such as exclusion/inclusion,
order preserving and robustness in several aspects, and performs very well compared to other variable importance measures considered. 

As \citet{gromping2015variable} pointed out in her paper, there is
no commonly accepted theoretical framework in the variable importance area. Not surprisingly, many critiques
on variable importance measures come up. \citet{ehrenberg1990unimportance}
pointed out that one should focus on the underneath causal mechanism
instead of the relative importance. We think SOIL is satisfactory in this regard. First, given enough information, SOIL assigns variable importance
close to one for these true predictors, which is consistent with revealing
the causal relationship between the response and the predictors. Second,
the SOIL importance of a variable goes beyond relative assessment of the variables and it gives an absolute sense on how much a variable is needed in the linear modeling with the available information. In
regression settings,  data analysts often use $t$ statistic or $p$-value to see if a variable
is significant or not. \citet{kruskal1989concepts}
pointed out that this pertains to a different concept. In their view, variable importance
is a population property while significance is a property of both
population and sample. To us, since all models are only approximations to model the data, 
there is advantage to treat variable importance measures as data dependent quantities that reflect the nature of the data. SOIL intends to do just that. 

Note that the two importance measures by the random forecast are not based on parametric modeling. When the GLM framework does not work for the data, our SOIL approach may not provide valuable information while random forest based ones may.

To be fair, it may be debatable if a variable that has some predictive power (one way or another) but is not needed in the best model should be given significant (reasonably strong) importance or not. Our view is that it seems rare to consider the covariates only individually and thus it is better to reflect the goal of finding the best set of covariates to explain the response in the importance measures. From this angle, while giving out relevant variables is certainly useful, it may not be the most essential from a modeling perspective.

Through our simulation work, we have shown that the other methods often give clearly higher importance to variables that are not in the true model and/or give lower values for some variables in the true model when the covariates are correlated, error variance is large, or there are interaction terms. In real applications, these situations occur rather commonly. Thus the results seem to suggest that when sparse modeling is the goal, those importance measures may not directly provide objective variable assessment information.

\section*{APPENDIX}
\subsection*{Proof of Theorem \ref{thm:Under-the-assumption}.}
\begin{proof}
Denote by $\as\backslash\ak$
the set of variables contained in $\as$ but not in $\ak$. Since
\begin{eqnarray*}
\dfrac{\sum_{k=1}^{K}w_{k}|\as\backslash\ak|}{r^{*}} & = & \dfrac{\sum_{k=1}^{K}w_{k}\sum_{j\in\mathcal{A}^{*}}I(j\notin\ak)}{r^{*}}\\
 & = & \dfrac{\sum_{j\in\mathcal{A}^{*}}\sum_{k=1}^{K}w_{k}I(j\notin\ak)}{r^{*}}\\
 & = & \dfrac{\sum_{j\in\mathcal{A}^{*}}\sum_{k=1}^{K}w_{k}(1-I(j\in\ak))}{r^{*}}\\
 & = & \dfrac{\sum_{j\in\mathcal{A}^{*}}(1-S_{j})}{r^{*}}.
\end{eqnarray*}
and by the definition of weak consistency,
\[
0\leq\dfrac{\sum_{k=1}^{K}w_{k}|\as\backslash\ak|}{r^{*}}\leq\dfrac{\sum_{k=1}^{K}w_{k}|\ak\nabla\as|}{r^{*}}{\displaystyle \ \overset{p}{\to}\ }0.
\]
Hence,
\begin{eqnarray*}
\dfrac{\sum_{j\in\mathcal{A}^{*}}(1-S_{j})}{r^{*}} & {\displaystyle \overset{p}{\to}} & 0.
\end{eqnarray*}
On the other hand, 
\begin{eqnarray*}
\dfrac{\sum_{j\notin\mathcal{A}^{*}}S_{j}}{r^{*}} & = & \dfrac{\sum_{j\notin\mathcal{A}^{*}}\sum_{k=1}^{K}w_{k}I(j\in\ak)}{r^{*}}\\
 & = & \dfrac{\sum_{k=1}^{K}w_{k}\sum_{j\notin\mathcal{A}^{*}}I(j\in\ak)}{r^{*}}\\
 & = & \dfrac{\sum_{k=1}^{K}w_{k}|\ak\backslash\as|}{r^{*}}\\
 & \leq & \dfrac{\sum_{k=1}^{K}w_{k}|\ak\nabla\as|}{r^{*}}\overset{p}{\to}0.
\end{eqnarray*}

\end{proof}

\begin{proof}
[Proof of Theorem \ref{thm:Let-,-}] Assume $\dfrac{|\overline{\aa}_{c}|}{r^{*}}$
does not converge to $0$ in probability as $n$ tends to infinity
($r^{*}$ may or may not depend on $n$), then we have a subsequence
, which for convenience we still denote by $\dfrac{|\overline{\aa}_{c}|}{r^{*}}$,
that is greater than a non-zero positive constant $\epsilon_{0}$,
i.e. $\dfrac{|\overline{\aa}_{c}|}{r^{*}}\geq\epsilon_{0}$ in probability.
Thus,

\begin{eqnarray*}
\dfrac{\sum_{j\in\mathcal{A}^{*}}(1-S_{j})}{r^{*}} & = & \dfrac{\sum_{j\in\as,S_{j}\leq c}(1-S_{j})}{r^{*}}+\dfrac{\sum_{j\in\as,S_{j}>c}(1-S_{j})}{r^{*}}\\
 & \geq & \dfrac{\sum_{j\in\as,S_{j}\leq c}(1-S_{j})}{r^{*}}\\
 & \geq & \dfrac{\sum_{j\in\as,S_{j}\leq c}(1-c)}{r^{*}}\\
 & = & (1-c)\dfrac{\sum_{j\in\mathcal{A}^{*}}I(S_{j}\leq c)}{r^{*}}\\
 & = & (1-c)\dfrac{|\overline{\aa}_{c}|}{r^{*}}\\
 & \geq & (1-c)\epsilon_{0},
\end{eqnarray*}
which contradicts with Theorem \ref{thm:Let-,-}. Hence, we have $\dfrac{|\overline{\aa}_{c}|}{r^{*}}{\displaystyle \ \overset{p}{\to}\ }0.$
Similarly, we can prove $\dfrac{|\underline{\aa}_{c}|}{r^{*}}{\displaystyle \ \overset{p}{\to}\ }0$.
\end{proof}

\bigskip
\section*{SUPPLEMENTARY MATERIALS}

\begin{description}

\item[R-package for SOIL:] R-package \texttt{SOIL} containing code to compute the SOIL importance measure described in the article. (GNU zipped tar file)

\item[Real data sets:] Data sets BGS and Bardet used in the illustration of SOIL in Section~\ref{sec:realdata}. (.rda file)

\item[Text document:] Supplementary materials for ``Sparsity Oriented Importance Learning for High-dimensional Linear Regression'' by Chenglong Ye, Yi Yang and Yuhong Yang. (.pdf file)

\end{description}

\bibliographystyle{agsm}
\bibliography{extracted}

\pagebreak

\begin{center}
\textbf{\large Supplemental Materials for ``Sparsity Oriented Importance Learning for High-dimensional Linear Regression"}
\end{center}

\setcounter{table}{0}
\renewcommand{\thetable}{S\arabic{table}}

\setcounter{figure}{0}
\renewcommand{\thefigure}{S\arabic{figure}}

\subsection*{Part A: Weighting using generalized fiducial inference.}

Based on Fisher's controversial fiducial idea, \citet{lai2015generalized} proposed the generalized fiducial inference applied to ``large $p$ small $n$'' problem.  Their paper concerns the generalized fiducial inference for the linear regression case. For each candidate model $\ak$,
the fiducial probability for the model is 
\[
p(\ak)\propto R(\ak)\equiv\Gamma(\frac{n-\left|\ak\right|}{2})(\pi RSS_{\ak})^{-\frac{n-\left|\ak\right|-1}{2}}n^{-\frac{\left|\ak\right|+1}{2}}\left(\begin{array}{c}
p\\
\left|\ak\right| 
\end{array}\right)^{-\gamma},
\]
where $RSS_{\ak}$ is the residual sum of squares of
$\ak$. For a practical reason, the authors approximate the above
fiducial probability by 
\[
r(\ak)\approx R(\ak)/\sum_{l=1}^{K}R(\mathcal{A}^{l}).
\]
We can use $r(\ak)$ as the weight $w_{k}$ for each candidate model.
It is shown in their paper that the true model will have the highest
fiducial probability among all the candidate models.

\subsection*{Part B: Additional simulation results.}
In this part, we provide the results of Example \ref{fig:resS1}-\ref{fig:resS5}, whose settings are described in Table \ref{tab:simu settings} of the main body of the article. These results support our conclusions as discussed in Section \ref{sec:property}.
\begin{figure}[H]
\begin{centering}
\includegraphics[scale=0.64]{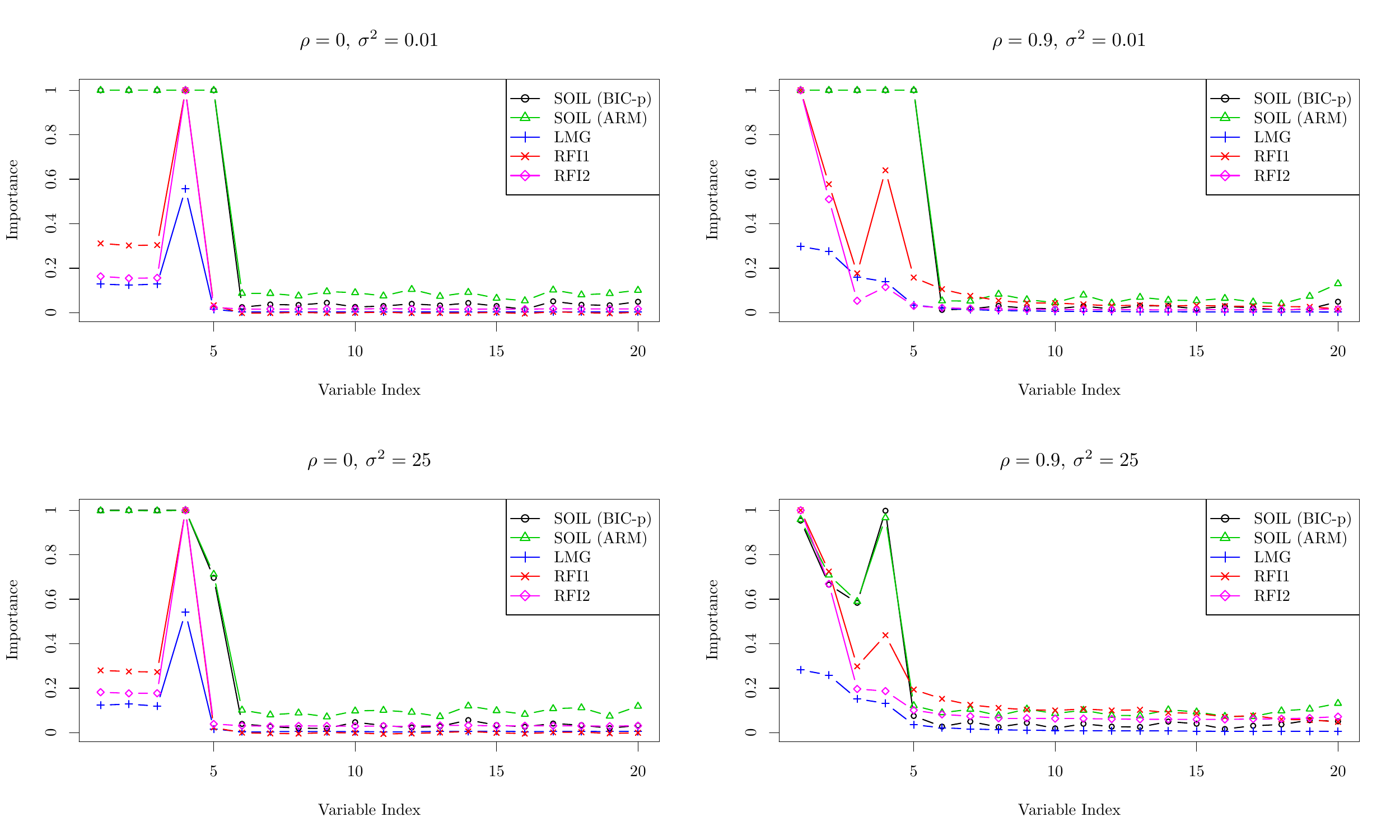}
\par\end{centering}
\caption{Simulation results for Example \ref{fig:resS1}, where $n=150$, $p=20$. The true
coefficients $\boldsymbol{\beta}^{*}=(4,4,4,-6\sqrt{2},\frac{3}{4},0,...,0)$.\label{fig:resS1}}
\end{figure}

\begin{figure}[H]
\begin{centering}
\includegraphics[scale=0.64]{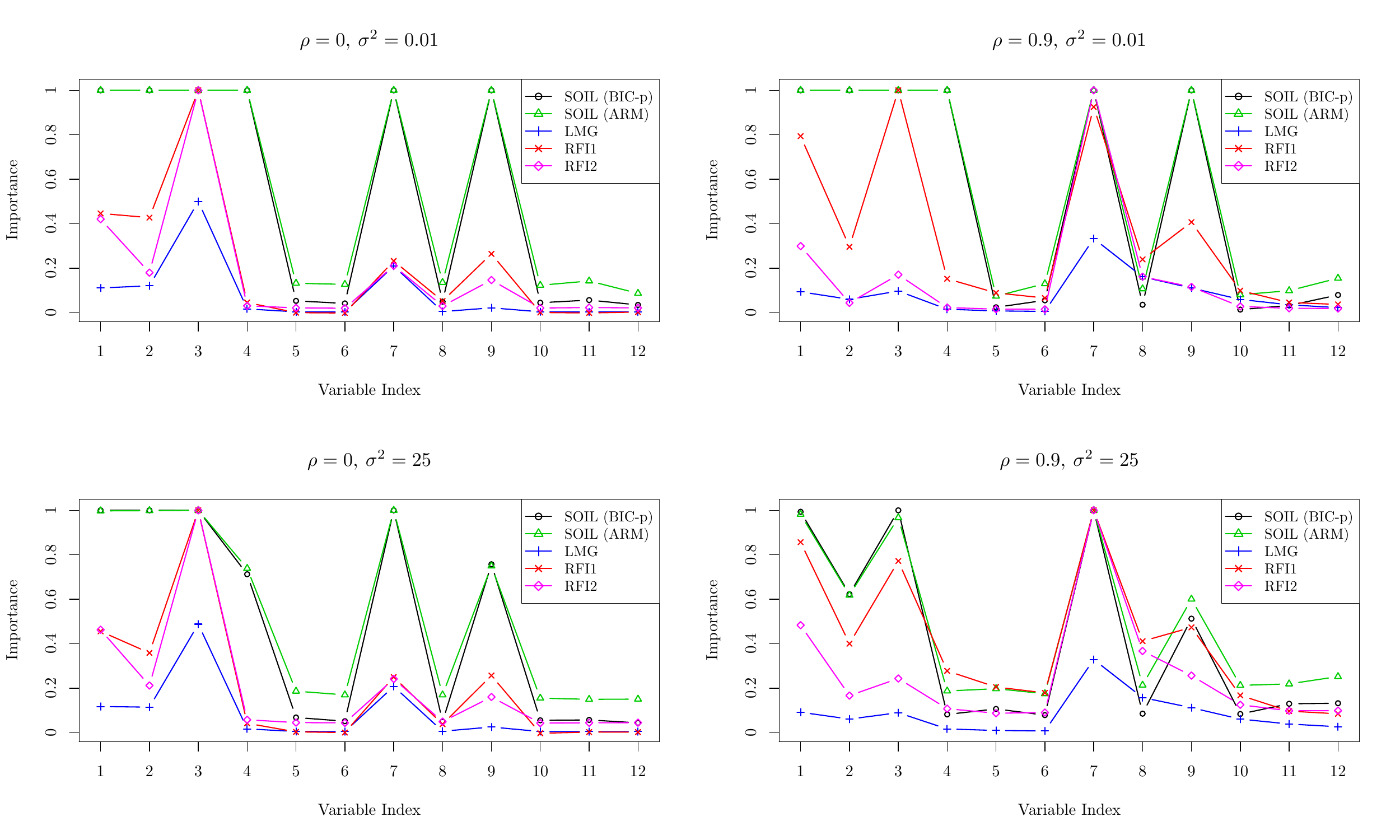}
\par\end{centering}

\caption{Simulation results for Example \ref{fig:resS2}, where $n=150$, $p=6$. The true
coefficients $\boldsymbol{\beta}^{*}=(4,4,-6\sqrt{2},\frac{3}{4},0,0)^{\intercal}$.
Add $(X_{1}^{2},X_{2}^{2},X_{3}^{2},X_{4}^{2},X_{5}^{2},X_{6}^{2})$
and corresponding coefficients $(\beta_{7}^{*},\beta_{8}^{*}\protect\ddd\beta_{12}^{*})^{\intercal}=(4,0,1,0,0,0)^{\intercal}$.\label{fig:resS2}}
\end{figure}

\begin{figure}[H]
\begin{centering}
\includegraphics[scale=0.64]{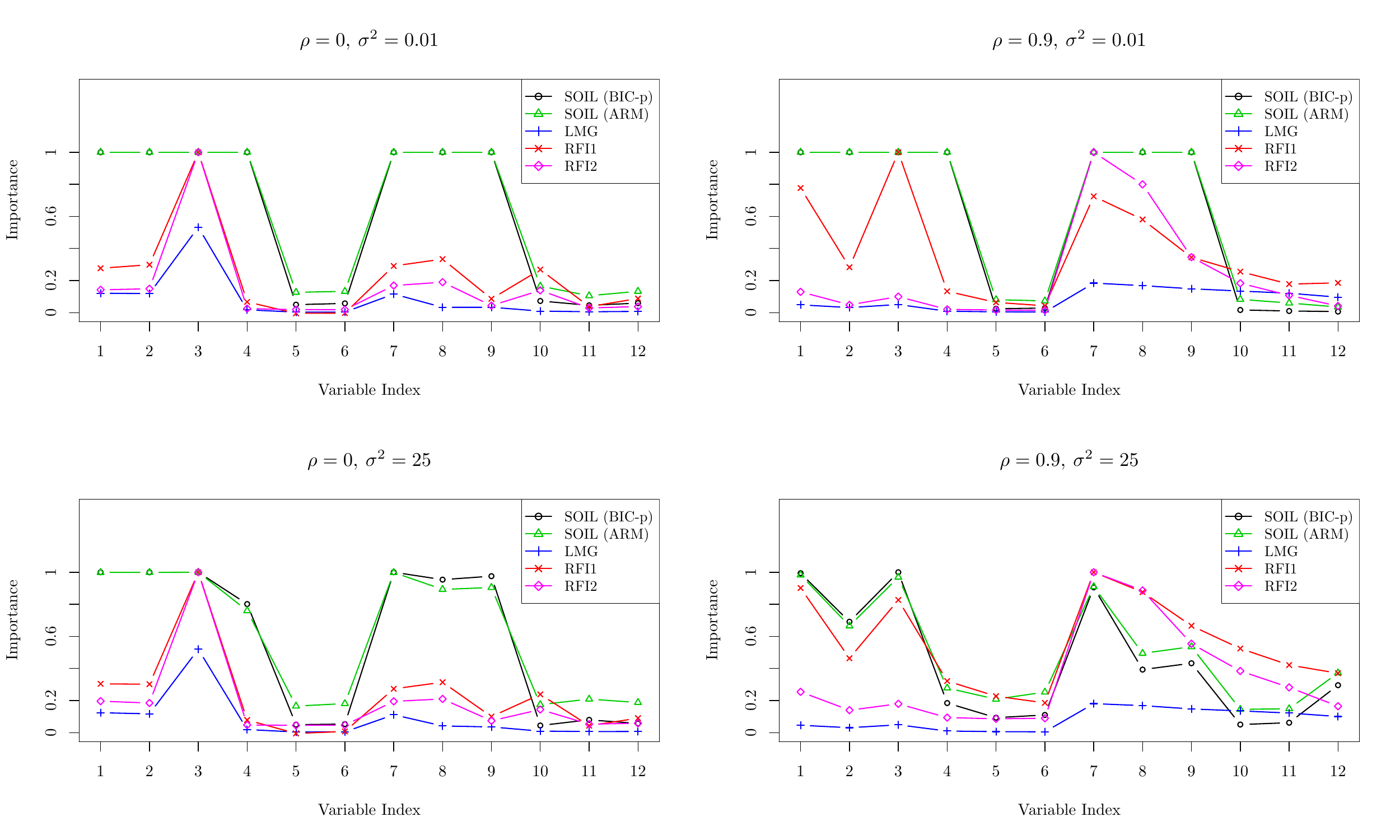}
\par\end{centering}

\caption{Simulation results for Example \ref{fig:resS3}, where $n=150$, $p=6$. The true
coefficient $\boldsymbol{\beta}^{*}=(4,4,-6\sqrt{2},\frac{3}{4},0,0)^{\intercal}$.
Add $(X_{1}X_{2},X_{1}X_{3},X_{1}X_{4},X_{2}X_{3},X_{2}X_{4},X_{3}X_{4})$
and corresponding coefficients $(\beta_{7}^{*},\beta_{8}^{*}\protect\ddd\beta_{12}^{*})^{\intercal}=(4,2,2,0,0,0)^{\intercal}$.\label{fig:resS3}}
\end{figure}

\begin{figure}[H]
\begin{centering}
\includegraphics[scale=0.64]{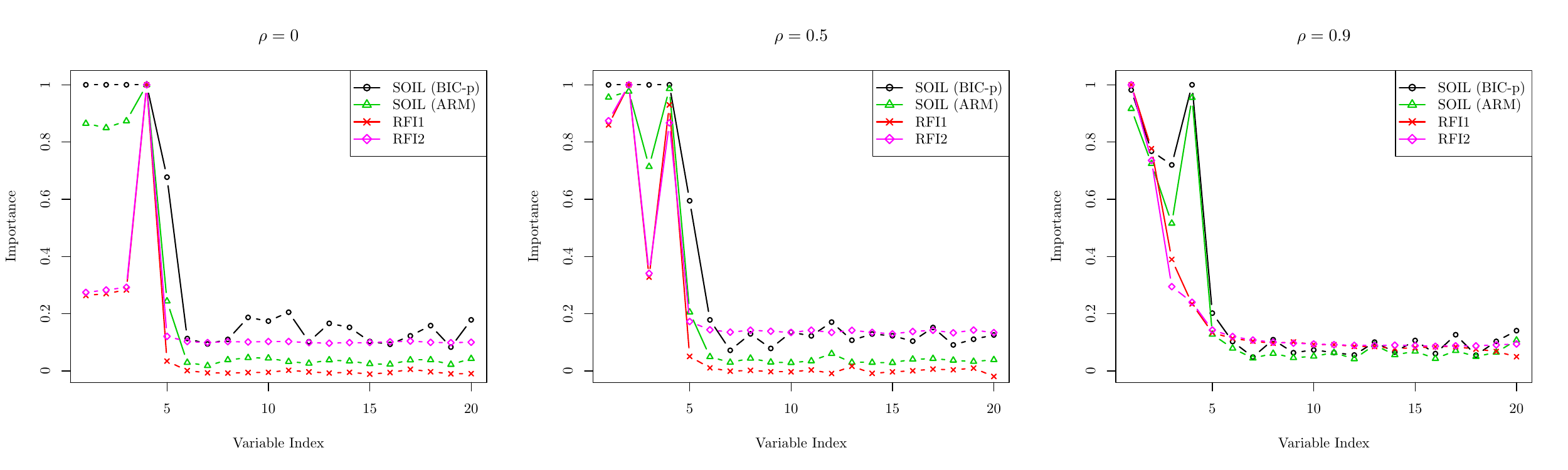}
\par\end{centering}

\caption{Simulation results for Example \ref{fig:resS4}, where $n=150$, $p=20$. The true
coefficients $\boldsymbol{\beta}^{*}=(4,4,4,-6\sqrt{2},\frac{3}{4},0,...,0)$.\label{fig:resS4}}
\end{figure}

\begin{figure}[H]
\begin{centering}
\includegraphics[scale=0.64]{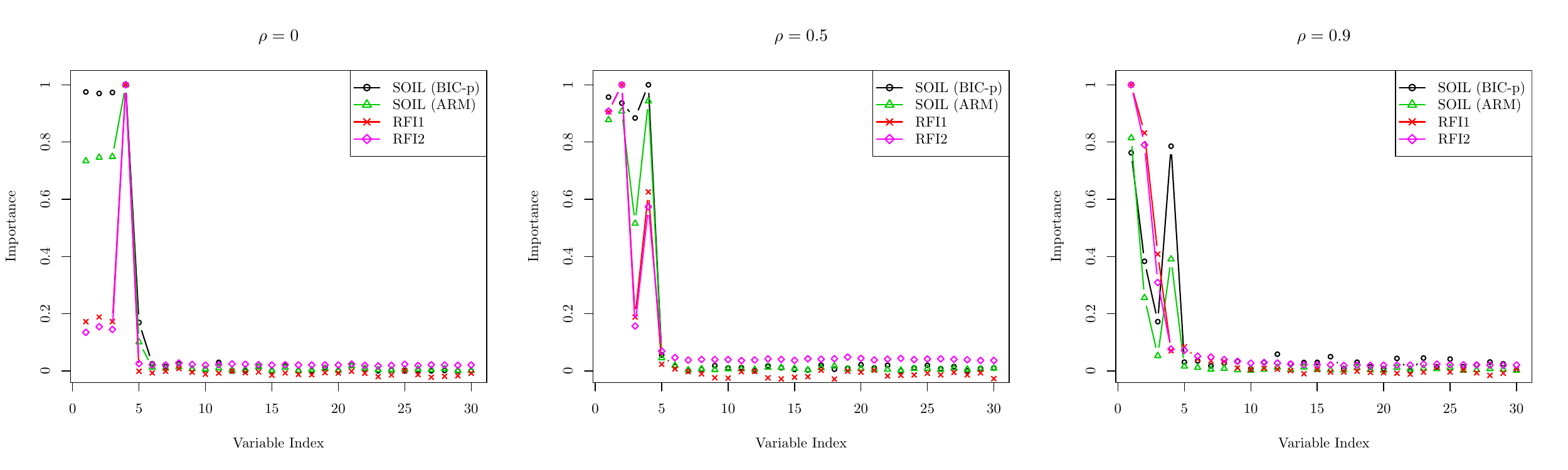}
\par\end{centering}

\caption{Simulation results for Example \ref{fig:resS5}, where $n=100$, $p=200$. The true
coefficients $\boldsymbol{\beta}^{*}=(4,4,4,-6\sqrt{2},\frac{3}{4},0,...,0)$.\label{fig:resS5}}
\end{figure}

\subsection*{Part C: Comparison with stability selection.}
In this subsection, we present a comparison of SS  \citep{meinshausen2010stability} importance to our SOIL importance.

The simulation data $\{y_{i},\mathbf{x}_{i}\}_{i=1}^{n}$ is generated
from the linear model $y_{i}=\mathbf{x}_{i}^{\intercal}\boldsymbol{\beta}^{*}+\epsilon_{i}$,
$\epsilon\sim N(0,\sigma^{2})$. We generate $\mathbf{x}_{i}$ from
multivariate normal distribution $N_{p}(0,\Sigma)$. For each element
$\Sigma_{ij}$ of $\Sigma$, $\Sigma_{ij}=\rho^{\left|i-j\right|}$,
i.e. the correlation of $X_{i}$ and $X_{j}$ is $\rho^{\left|i-j\right|}$.
We consider two cases, the settings of which are listed in Table \ref{tab:Simulation-settingssssss}.

\begin{table}[H]
\begin{centering}
\begin{tabular}{cccccc}
\toprule 
Example & $n$ & $p$ & $\rho$ & $\sigma^{2}$ & Coefficients\tabularnewline
\hline 

1 & 100 & 20 & 0 & 0.01 & {\small{}$\boldsymbol{\beta}^{*}=(4,4,4,-6\sqrt{2},\frac{3}{4},0,...,0)^{\intercal}$}\tabularnewline
 
2 & 100 & 20 & 0.7 & 0.1 & {\small{}$\boldsymbol{\beta}^{*}=(4,0,4,-6\sqrt{2},\frac{3}{4},0,...,0)^{\intercal}$}\tabularnewline
\bottomrule 
\end{tabular}
\par\end{centering}

\caption{Simulation settings for  SS \label{tab:Simulation-settingssssss}}
\end{table}

It can be seen from Tables \ref{tab:Variable-importance-foreg12} and \ref{tab:Variable-importance-foreg13} that SS
does not give enough importance to the true variable $X_{5}$ in Example 1 while it more strongly supports the noise variable $X_{2}$ than the true variable $X_{5}$ in Example 2, which leads to unavoidable incorrect variable selection regardless of the cutoff to be used to decide if a variable is in or out based on its importance. In contrast, SOIL-ARM and SOIL-BIC-p pick all
the important variables and leave noise variables out. From these results, together with the fact that
the main goal of SS is not on variable importance, we have not 
considered stability selection in the main simulations in this work. 

\begin{table}[H]
\begin{centering}
\begin{tabular}{lcccccc}
\toprule
Method/Variable  & $X_{1}$ & $X_{2}$ & $X_{3}$ & $X_{4}$ & $X_{5}$ & max of rest\tabularnewline
\hline 
 
SOIL-ARM & 1.00 & 1.00 & 1.00 & 1.00 & 1.00 & 0.12\tabularnewline
 
SOIL-BIC-p & 1.00 & 1.00 & 1.00 & 1.00 & 1.00 & 0.07\tabularnewline
 
Stability Selection & 0.99 & 0.99 & 0.99 & 1.00 & 0.02 & 0.002\tabularnewline
\bottomrule 
\end{tabular}
\par\end{centering}

\caption{Variable importance for Example 1. \label{tab:Variable-importance-foreg12}}
\end{table}

\begin{table}[H]
\begin{centering}
\begin{tabular}{lcccccc}
\toprule  
Method/Variable  & $X_{1}$ & $X_{2}$ & $X_{3}$ & $X_{4}$ & $X_{5}$ & max of rest\tabularnewline
\hline  
SOIL-ARM & 1.00 & 0.15 & 1.00 & 1.00 & 1.00 & 0.14\tabularnewline
 
SOIL-BIC-p & 1.00 & 0.06 & 1.00 & 1.00 & 1.00 & 0.05\tabularnewline
 
Stability Selection & 1.00 & 0.44 & 0.94 & 1.00 & 0.26 & 0.05\tabularnewline
\bottomrule 
\end{tabular}
\par\end{centering}

\caption{Variable importance for Example 2.\label{tab:Variable-importance-foreg13}}
\end{table}

\end{document}